\documentclass[10pt]{article} 
\usepackage[preprint]{tmlr}

\usepackage{amsmath,amsfonts,bm}

\def\eqref#1{equation~\ref{#1}}

\def\1{\bm{1}}

\DeclareMathAlphabet{\mathsfit}{\encodingdefault}{\sfdefault}{m}{sl}
\SetMathAlphabet{\mathsfit}{bold}{\encodingdefault}{\sfdefault}{bx}{n}

\DeclareMathOperator*{\argmax}{arg\,max}

\usepackage{hyperref}
\usepackage{url}

\usepackage{acronym}
\usepackage{amssymb}
\usepackage{mathtools}
\usepackage{amsthm}
\usepackage{algorithm}
\usepackage{algcompatible}
\newcommand{\RETURN}{\State \textbf{return} }
\usepackage{tikz}
\usepackage{caption}
\usetikzlibrary{arrows.meta}
\usepackage{threeparttable}
\usepackage{subfigure}
\usetikzlibrary{automata, positioning, arrows.meta}
\usepackage{float}
\usepackage{subcaption}
\usepackage{wrapfig}

  \newcommand{\ie}{\textit{i.e.}}

  \newcommand{\dist}[1]{\mathcal{D}(#1)}

 \newcommand{\proj}{\mathsf{Proj}}
 
\newcommand{\reals}{\mathbf{R}}
\newtheorem{theorem}{Theorem} 
\newtheorem{definition}{Definition}
\newtheorem{proposition}{Proposition}

\newtheorem{problem}{Problem}

\newtheorem{example}{Example}
\newtheorem{remark}{Remark}
\newtheorem{corollary}{Corollary}
\newtheorem{lemma}{Lemma}
\newtheorem{assumption}{Assumption}

\acrodef{mdp}[MDP]{Markov decision process} 

 \newcommand{\abs}[1]{\lvert #1 \rvert}
 
 \DeclareMathOperator*{\optmins}{\mathrm{min.}}
 \DeclareMathOperator*{\optmaxs}{\mathrm{max.}}
 
 \DeclareMathOperator*{\optsts}{\mathrm{s.t.}}
  \newcommand{\norm}[1]{\lVert #1 \rVert} 

\newcommand{\Expect}{\mathbb{E}}  

\definecolor{darkgreen}{rgb}{0,0.5,0}

  \title{Incentive Design for Multi-Agent Systems: A Bilevel Optimization Framework for Coordinating Independent Agents and Convergence Analysis
  }

\author{\name Xinyi Wei \email weixinyi@ufl.edu \\
      \addr Department of Electrical and Computer Engineering\\
      University of Florida
      \AND
      \name Shuo Han \email hanshuo@uic.edu \\
      \addr Department of Electrical and Computer Engineering\\
University of Illinois Chicago
      \AND
      \name Jie Fu \email fujie@ufl.edu\\
      \addr Department of Electrical and Computer Engineering\\
      University of Florida}

\def\month{MM}  
\def\year{YYYY} 
\def\openreview{\url{https://openreview.net/forum?id=XXXX}} 

\begin{document}

\maketitle

\begin{abstract}
 Incentive design aims to guide the performance of a system towards a human's intention or preference. We study this problem in a multi-agent system with one leader and multiple followers. Each follower independently solves a \ac{mdp} to maximize its own expected total return with the same state space and action space. However, the leader’s objective depends on the collective best-response policies of all followers. To influence these policies of followers, the leader provides side payments as incentives to individual followers at a cost, aiming to align the collective behaviors of followers with its own goal  while minimizing this cost of incentive. Such a leader-followers interaction is formulated as a bilevel optimization problem: the lower level consists of followers individually optimizing their MDPs given the side payments, and the upper level involves the leader optimizing its objective function given the followers' best responses. The main challenge to solve the incentive design is that the leader’s objective  is generally non-concave and the lower level optimization problems can have multiple local optima. To this end, we employ a constrained optimization reformation of this bi-level optimization problem and develop an algorithm that provably converges to a stationary point  of the original problem, by leveraging several smoothness properties of value functions in MDPs. We validate our algorithm in a stochastic gridworld by examining its convergence, verifying that the constraints are satisfied, and evaluating the improvement in the leader's performance.

\end{abstract}

\section{Introduction}

\label{sec:introduction}
Incentive design aims to determine how an agent should incentivize a group of autonomous or semi-autonomous AI systems to adjust their behavior in alignment with human intentions, particularly when these systems have their own objectives, uncertainties, or interactions with other agents. The incentive design can be studied in the framework of the leader-follower games, also known as principal-agent games~\citep{boltonContractTheory2005,ho1981information,ho1984interactions,simaan1973stackelberg}, where the follower encounters a planning problem that can be shaped by an incentive policy of the leader, and the leader is to design an  incentive policy so that  the follower's best response  aligns with the leader’s objective. This framework is widely used in mechanism design across various fields, such as economic markets~\citep{myerson1981optimal,williams2011persistent, easley2015behavioral}, online platforms~\citep{ratliff2019perspective}, smart cities~\citep{mei2017game,kazhamiakin2015using}, smart grids~\citep{braithwait2006incentives,alquthami2021incentive}, and machine learning~\citep{kang2023incentive,liu2024game,pasztor2024bandits}. For instance, Alquthami et al.~\citep{alquthami2021incentive} propose a mechanism for the fair design of customized price profiles for all users and tailored electricity tariffs for high-consumption customers. 
 In the smart grid, to mitigate peak demand,  
 the utility company may reduce prices at certain times to incentivize users, aiming to optimize social welfare while also improving profit. As another example, Liu et al.~\citep{liu2024game} formulate the data poisoning attack as a Stackelberg game, where the attacker (leader) crafts poisonous perturbations to the training data with the goal of reducing test accuracy, while the classifier (follower) optimizes its network parameters on the poisoned dataset.

 In the real world, a leader may wish to steer a group of independent yet heterogeneous followers toward their intended preference. For example, in ride-sharing platforms, each passenger selects an optimal ride plan based on personal needs, while the platform adjusts incentives for certain plans to encourage behaviors such as frequent usage or early bookings. Motivated by such applications, we investigate incentive design in a hierarchical setting involving a single leader and multiple followers, where each follower’s planning problem is modeled as a \ac{mdp}. This raises the question: \emph{how can the leader provide side payments to the group of followers to steer them closer to the intended preference while considering the cost of these incentives?} We address the problem through bilevel optimization. At the lower level, each follower independently selects an optimal policy with respect to a reward function shaped by side payments from the leader. At the upper level, the leader’s objective depends on the collective best responses of the followers and involves balancing the benefit of influencing their behavior against the cost of providing incentives. Anticipating the followers’ best responses, the leader aims to design an incentive policy that aligns the followers' behavior with his objective in a cost-effective manner.

\subsection{Related work}

Incentive design in \acp{mdp} has been investigated in the context of AI alignment, where the goal is to ensure that AI systems act consistently with human intentions.
It is known that reward functions are often unintentionally and inevitably mis-specified, which can lead to harmful or undesirable behaviors such as reward hacking and goal misgeneralization~\citep{ji2023ai}. To achieve AI alignment, a reward or model parameter designer (the leader) guides the behavior of a learning agent (the follower) to improve system performance by aligning the agent's reward function with human knowledge or intrinsic rewards \citep{stadie2020learning}. 
 Chen et al. ~\citep{chen2022adaptive} study how to regulate an \ac{mdp} agent when human wish to take the external costs/benefits of its actions into consideration. They formulate the problem as a bilevel program, where the upper-level model designer  (leader) regulates the lower-level MDP (follower) by adjusting model parameters that influence the rewards and/or transition kernels. 
However, all of the aforementioned work  assume a unique optimal solution in the lower-level problem and a single-leader-single-follower interaction.  However, it is well-known that the optimal policy in an MDP may not be unique. 
Thus, we do not restrict the lower-level problem has a unique solution and consider how to design this incentive policy even if there are  multiple optimal policies for the followers' MDPs.

For the incentive design problem in multi-agent system, Ratliff et. al.~\citep{ratliff2020adaptive} propose a method to adaptively design incentives for principal-agent problems in which the principal faces adverse selection in its interaction with multiple agents.  They consider both the cases where agents play best response to one another (Nash)
and where they employ myopic update rules. However, they study a different class of incentive design problems in which the follower's decision-making is modeled using a generalized linear model, as opposed to a \ac{mdp} with a tunable reward function studied herein.

Since the incentive design could be cast as a bilevel optimization problem \citep{casorran2019study}, we discuss recent work on bilevel optimization and the relation to our proposed method. In recent years (2022-), various studies have proposed algorithms to address bilevel optimization problems where the lower-level problem has multiple optimal solutions. For example, an algorithm was developed to converge to a stationary point of the bilevel problem using a first-order method \citep{kwon2023fully}, and a primal-dual bilevel optimization (PDBO) approach demonstrated convergence to an optimal solution \citep{sow2022primal}. However, these methods generally require assumptions about the convexity of objective functions, which may limit their applicability in broader contexts. 
 \cite{liu2022bome} develop a first-order algorithm for solving bi-level optimization problems with non-convex functions. In the \ac{mdp}, the value function is generally non-concave under both direct and softmax parameterizations. However, given the L-smoothness of the value function and other mild conditions, we can show the assumptions required for the convergence results in \cite{liu2022bome} are satisfied (section~\ref{sec:convergence}) and thus ensure the convergence of the proposed method for the class of incentive design problems.


\subsection{Our contributions}

\begin{itemize}
    \item We study a class of incentive design problem in the presence of multiple followers where the followers may have multiple optimal strategies given the leader's incentive policy. Therefore, we can extend the application of the incentive design to more practical applications. 
    
    \item To solve the   Stackelberg equilibrium, standard learning dynamics require computing the total derivative of the leader’s objective function via the implicit function theorem~\citep{fiez2020implicit}. This total derivative involves computing the inverse of the Hessian matrix, which is computationally expensive and often impractical to evaluate at each iteration. To tackle this challenge, we employ the first-order method in \cite{liu2022bome} and reformulate the original problem as a constrained optimization problem. We then develop a first-order algorithm for computing a stationary point for both direct parameterization and softmax parameterization of the policy, and derive the corresponding gradient computation, which can be obtained directly or estimated using standard RL techniques.
    
    \item  We leverage smoothness properties of the value functions in \ac{mdp} to prove the convergence of this first order method, provided the followers' best responses are restricted to two classes of policy  spaces (direct parameterization and softmax parameterization).
    
    \item We demonstrate the algorithm's effectiveness by showing the convergence of the algorithm and satisfaction of the constrains, and the improvement in the leader's performance in a stochastic gridworld environment using two different policy parameterization methods.
\end{itemize}

\section{Preliminaries and problem formulation}\label{problem-formulation}

\paragraph{Notation} Throughout the article, we adopt the following notations.   $\norm{\cdot}$ refers to the $L^2$ norm in this paper. The space of probability distributions on the set $S$ is denoted by $\dist{S}$. We use superscripts to indicate time steps. For instance, a history of state-action pairs is represented as $((s^{(0)},a^{(0)}),\cdots,(s^{(n)},a^{(n)}))$.  We use the subscript to indicate the variable associated with the follower, and boldface notation to denote joint state, action, and policy. For instance, the joint state $\mathbf{s}$ denotes $(s_1,\cdots,s_n)$.

\subsection{Preliminaries}
A single-agent \ac{mdp} is defined as a tuple $M = (S,A, P, \mu, \gamma, \Bar{R})$, where $S$ is a finite set of states, $A$ is a finite set of actions, $P\colon S \times A \rightarrow \dist{S}$ is a probabilistic transition function such that $P({s}'|{s}, {a})$ is the probability of reaching state ${s}'$ given action ${a}$ being taken at state ${s}$, $\mu \in \dist{S}$ is the initial distribution, $\gamma \in [0,1]$ is the discount factor, and $\Bar{R} \colon S \times A \rightarrow \reals$ is the original reward function such that $\Bar{R}(s, a)$ is the reward received by the follower for taking action $a$ in state $s$.

Let $\pi: S\times A\rightarrow [0,1]$ be a  stochastic, Markovian policy that specifies, for each state $s\in S$, a probability distribution over the actions $a \in A$.
The value function of a Markov policy $\pi$ given reward $R: S\times A\rightarrow \reals$ is defined by
    \[
    V(\mu, \pi) = \Expect_\pi \left[\sum_{k=0}^\infty \gamma^k R(S^{(k)},A^{(k)})| S^{(0)} \sim \mu\right].
    \]

The Q-value function given policy $\pi$ $Q\colon S \times A \rightarrow \reals$ is defined as:
    \[
    Q(s,a,\pi) = \Expect_\pi \left[ \sum_{k=0}^\infty \gamma^k R(S^{(k)},A^{(k)})|S^{(0)}=s,A^{(0)}=a\right].
    \]

The state-action visitation distribution $d^\pi_\mu(s,a)$ of a policy $\pi$ is defined as:
    \[
    d^\pi_\mu(s,a) = (1-\gamma)\sum_{k=0}^\infty \gamma^k {\Pr}^\pi(S^{(k)}=s, A^{(k)}=a|S^{(0)}\sim {\mu}),
    \] where ${\Pr}^\pi(S^{(k)}=s,A^{(k)}=a|S^{(0)} \sim  {\mu})$ is the probability of visiting state $s$ and taking action $a$ at the $k$-th time step when the agents follow policy $\pi$ with an initial state   distribution   $\mu$.

\paragraph{The followers' model} 
We model the behavior of $n$ distinct followers using $n$ different \ac{mdp}s. While these \ac{mdp}s share the same state space and action space, they differ in their transition probabilities, reward functions, initial distributions, and discounted factors. Let $i  \in \mathbf{N} = \{1,\ldots,n\}$ be the index of a follower, then the \ac{mdp} for follower $i$ is denoted as
 \[
M_i = (S,A,P_i,\mu_i, \gamma_i, \bar R_i), \forall i \in \mathbf{N}.
\] 

\begin{assumption}
\label{as:indepence}
The follower $i$ solves his optimal policy in its \ac{mdp} $M_i$ independently of follower $j$, for any $i, j\in \mathbf{N}$. One follower’s action does not affect the state of other followers. Additionally, all followers arrive at states and take actions simultaneously, starting at $t=0$.
\end{assumption}

\paragraph{The leader's model} By Assumption \ref{as:indepence}, all followers are independent in their transition dynamics and reward functions. However, their collective behaviors affect the leader's value in the following way. The leader's decision problem can be viewed from a multi-agent $\ac{mdp}$:
        \[
        M = (\mathbf{S}, \mathbf{A}, \mathbf{P}, \bm{\mu}, \gamma, R_l),
        \] where $\mathbf{S} \triangleq S^n$ denote the joint state space (each $s^n\in \mathbf{S}$ is the followers' collection of states), $\mathbf{A} \triangleq A^n$ denote the joint action space (each $a^n\in \mathbf{A}$ is the followers' collection of actions, $P\colon \mathbf{S} \times \mathbf{A} \rightarrow \dist{\mathbf{S}}$ is a probabilistic joint transition function such that $\mathbf{P} (\mathbf{s}'| \mathbf{s}, \mathbf{a}) = \prod_{i \in \mathbf{N}} P_i({s}'|{s}, {a})$ which is the probability of reaching followers' joint state $\mathbf{s}'$ given followers' joint action $\mathbf{a}$ being taken at followers' joint state $\mathbf{s}$, $\bm{\mu}$ is the followers' joint initial distribution such that $\bm{\mu}(\mathbf{s}) = \prod_{i \in \mathbf{N}}\mu_i(s_i)$, $\gamma \in [0,1]$ is the discount factor, and $R_l \colon \mathbf{S} \times \mathbf{A} \rightarrow \reals$ is the reward function such that $R_l(\mathbf{s}, \mathbf{a})$ is the reward received by the leader for the $n$ followers taking joint action $\mathbf{a}$ at joint state $\mathbf{s}$.


Given the followers' the joint policy $\bm{\pi} \colon \mathbf{S} \rightarrow \dist{\mathbf{A}}$ such that $\bm{\pi}(\mathbf{a}|\mathbf{s}) = \prod_{i \in \mathbf{N}}\pi_i(a_i|s_i)$, the leader's value function is then defined as

\begin{equation}
    V_l(\bm{\mu}, \bm{\pi}) = \Expect_{\bm{\pi}}\left[\sum_{k=0}^\infty \gamma^k R_l(\mathbf{S}^{(k)}, \mathbf{A}^{(k)}) \mid \mathbf{S}^{(0)} \sim \bm{\mu} \right].
\end{equation}

\paragraph{Incentives as side payments} 
The leader can incentivize the followers to align their behavior with the leader’s objective. Leader's tailored incentive to the group of followers is represented as a function $x \colon \mathbf{N} \times S \times A \to \reals_+$, hereafter referred to as the \emph{side payments}. Specifically, $x(i,s,a)$ is the additional reward that leader offers to the follower $i$ when follwer $i$ takes action $a$ in state $s$. We can view $x$ as a vector with the entry to be $x(i,s,a)$ for each $i \in \mathbf{N}$, $s\in S$, $a \in A$. We denote the set of vector $x$ as $\mathcal{X}$. The side payments for the follower $i$ are denoted as $x_i \colon S \times A \to \reals_+$, where $x_i(s,a)=x(i,s,a)$. The vector $x_i$ can also be viewed as a vector in a similar way to $x$.

Given a  side payments $x_i$, the follower $i$'s modified reward function $R_i(x_i)$ is defined as follows. For all $(s,a)\in S\times A$, 
		\begin{equation}
			\label{eq:side payments-reward}
			R_i(s, a; 
			x_i ) = \Bar{R}_i(s,a)+  x_i (s,a) .  
		\end{equation}
As a result, the follower $i$'s planning problem with   side payments  $x_i$ is an \ac{mdp} with modified reward:
  \[M_i(x_i) = (S,A, P_i, \mu_i, \gamma_i, R_i(x_i)).\]

Given the follower $i$'s policy $\pi_i\colon S \rightarrow \dist{A}$,   his value function is defined as:
\begin{equation}
    V_i(\mu_i,\pi_i; {x}_i) = \Expect_{\pi_i}\left[\sum\limits_{k = 0}^{\infty}\gamma_i^k R_i(S_i^{(k)}, A_i^{(k)};x_i)|{S}_i^{(0)} \sim \mu_i\right].
\end{equation}


\subsection{Problem statement}

\paragraph{Policy parameterization}
We now introduce two common parametric policy classes $\{\pi_\theta|\theta \in \Theta\}$ which are complete in the sense that any   Markov policy can be
represented in each class. The two classes are as follows:

\begin{itemize}
    \item Direct parameterization: The polices are parameterized by: 
    \[
\pi_\theta(a|s) = \theta_{s,a},
\] where $\theta_s \in \dist{A}$, for all $s \in S$.

\item  Softmax parameterization: For unconstrained $\theta_{s,a} \in \reals$,
\[
    \pi_\theta(a|s) = \frac{\exp{(\theta_{s,a}/\tau)}}{\sum_{a'\in A} \exp{(\theta_{s,a'}/\tau)}},
\] where $\tau$ is the temperature parameter that determines how closely the softmax function approximates the hardmax function.
\end{itemize}

\begin{problem}
Assuming follower $i$'s policy is parameterized by vector $\theta_i \in \Theta_i$, we denote vector $\theta \triangleq [\theta_1,\cdots,\theta_n] \in \Theta$. The incentive design problem is formulated as the following bilevel optimization problem: 

\begin{equation}
    \begin{aligned}
    \underset{x\in \mathcal{X}, \theta \in \Theta}{\optmaxs} \ &V_l(\bm{\mu}, \bm{\pi}_\theta) - w\cdot C(x)  \\
    \optsts\  &{\theta_i} \in \underset{\theta_i \in \Theta_i}{\argmax} \  V_i(\mu_i, \pi_{\theta_i}; x_i), \forall i \in \mathbf{N},
    \end{aligned}
    \label{eq:opt-original}
\end{equation}
where $w$ is the regularization factor, and $C\colon\mathcal{X}\rightarrow \reals_+$ is a cost function for side payments.  
\end{problem}
In the following, when the initial state distributions are clear from the context, we omit them: $ V(  \bm{\pi}_\theta) \triangleq V(\bm{\mu}, \bm{\pi}_\theta) $, $V_i( \pi_{\theta_i}; x_i) \triangleq  V_i(\mu_i, \pi_{\theta_i}; x_i),  \forall i \in \mathbf{N}$.

\section{A bilevel optimization approach and convergence analysis}

\subsection{A reformulation to a constrained optimization problem}
 To maximize the leader's objective function with his decision variable $x$, we reformulate the
 bilevel optimization problem \ref{eq:opt-original} to a constrained optimization problem. 
 
 First, due to the independent dynamics and decision-making processes of the followers, it is easy to show that problem \ref{eq:opt-original} is equivalent to \begin{equation}
 	\label{eq:opt-eq}
 	\begin{aligned}
 		\underset{x\in \mathcal{X}, \theta \in \Theta}{\optmaxs} \ &V_l(\bm{\pi}_\theta) - w \cdot C(x)  \\
 		\optsts \  &\theta \in \underset{\theta' \in \Theta}{\argmax} \  \sum_{i \in \mathbf{N}}V_i(\pi_{\theta'_i}; x_i).
 	\end{aligned}
 \end{equation}

Let $V_i^\ast(x_i)\triangleq \max_{\theta_i\in \Pi_i} V_i(\pi_{\theta_i};x_i)$,  then problem \ref{eq:opt-eq} is equivalent to:

\begin{equation}
\label{eq:ineq-constraint}
    \begin{aligned}
        \underset{x \in \mathcal{X}, \theta \in\Theta}{\optmaxs} \ &V_l(\bm{\pi}_\theta) - w \cdot C(x) \\
       \optsts \ &\sum_{i \in \mathbf{N}}V_i(\pi_{\theta_i}; x_i) - \sum_{i \in \mathbf{N}}V_i^\ast(x_i) \geq 0.
    \end{aligned}  
\end{equation}

The constraint forces $\theta$ to be within the set that maximizes the value achieved by the followers. It is noted that $\sum_{i \in \mathbf{N}}V_i(\pi_{\theta'_i};x_i) - \sum_{i \in \mathbf{N}}V_i^\ast(x_i)=0$ for any $\theta'_i \in \arg\max_{\theta_i} V_i(\pi_{\theta'_i};x_i)$, and $\sum_{i \in \mathbf{N}}V_i(\pi_{\theta'_i};x_i) - \sum_{i \in \mathbf{N}}V_i^\ast(x_i)<0$, for any $\theta'_i \notin \arg\max_{\theta_i}V_i(\pi_{\theta'_i};x_i)$.  A feasible solution  $(x,\theta) $ to problem \ref{eq:ineq-constraint} ensures that  $\theta $ is the collection of best responses of followers with the leader's decision variable $x$.

Intuitively, the leader is to determine jointly the side payments $x$ and the policy parameters $\theta$ in the constraint sets. When choosing a follower's policy,  the leader must ensure the chosen policy performs as good as the follower's best response to the side payments $x$, for each individual follower, to satisfy the constraint.

\begin{remark}
We can show that the solution to problem~\ref{eq:ineq-constraint} is a Strong Stackelberg equilibrium. 
In  a Stackelberg game,   the leader chooses an action $u \in U$, and then the follower, after observing $u$, selects an action $v \in V$. The follower’s payoff is denoted by $h(u,v)$, and the set of the follower’s best responses to a given leader action $u$ is
\[
S(u) = \arg\max_{v \in V} h(u,v).
\]

The leader’s payoff is denoted by $H(u,v)$, and the leader’s objective is to maximize this payoff, taking into account the follower’s best response.

  In the Strong Stackelberg Equilibrium (SSE),  the follower is assumed to break ties in favor of the leader (i.e. select the best response that is most favorable to the leader). The set of  best responses in optimistic position is defined as:

\[
S^o(u) = \arg \optmaxs_{v \in V}\{ H(u,v): v \in S(u)\}.
\]
  Then the bilevel optimization problem to solve SSE is defined as:

 \[
 \begin{aligned}
     \optmaxs_{u \in U, v \in V} \quad &H(u,v) \\ \optsts \quad &v \in S^o(u)
 \end{aligned}
    \]
In the Weak Stackelberg Equilibrium (WSE), the follower is assumed to break ties against the leader, and the leader will have to prepare for the worst. The set of follower's best responses in pessimistic position is defined as:
\[
S^p(u) = \arg \optmins_{v \in V}\{ H(u,v): v \in S(u)\}.
\]

 Then the bilevel optimization problem to solve WSE is defined as:

 \[
 \begin{aligned}
     \optmaxs_{u \in U, v \in V} \quad &H(u,v) \\ \optsts \quad &v \in S^p(u).
 \end{aligned}
    \]

 Since both the incentive variable and the follower’s policy parameter maximize the leader’s objective function, the solution to the reduced constrained optimization problem \ref{eq:ineq-constraint} is to solve the SSE. This aligns with the setting in which followers are willing to adopt policies that benefit the leader’s objective. 
\end{remark}

\subsection{The algorithm and gradient computation}
\label{se:alg and gradient}
To notational convenience, 
  let  the objective function be denoted by $$f(x,\theta) \triangleq V_l(\bm{\pi}_\theta) - w \cdot C(x),$$ and the constraint function be denoted by $$q(x,\theta) \triangleq g(x,\theta) - g^\ast (x),$$ where $g(x,\theta)  \triangleq  \sum_{i \in \mathbf{N}}V_i(\pi_i;x_i)$ and  $    g^\ast(x) \triangleq   \sum_{i \in \mathbf{N}}V_i^\ast(x_i)  .$

 We adopt a bilevel optimization algorithm~\citep{liu2022bome} to solve the constrained optimization problem \ref{eq:ineq-constraint} described in the Algorithm~\ref{alg}.

\begin{algorithm}[h]
\caption{Bilevel Optimization Algorithm}
\label{alg}
\begin{algorithmic}[1]
\REQUIRE Step size $\xi$, positive constant $\eta$, known variables in followers' MDPs, initialization $(x^{(0)}, \theta^{(0)}, \lambda^{(0)})$.
\FOR{iteration $t = 0, 1, 2, \dots$}
    \STATE Update $\lambda^{(t+1)}$ according to $$   \lambda^{(k)} = \max \left(\eta - \frac{\langle \nabla f(x^{(k)},\theta^{(k)}), \nabla q(x^{(k)},\theta^{(k)}) \rangle}{\norm{\nabla q(x^{(k)},\theta^{(k)})}^2}, 0\right).$$
    \STATE Update $x^{(t+1)}$ according to $$x^{(t+1)} = \proj_{\mathcal{X}} [ x^{(k)} + {\xi}(\nabla_x  f(x^{(k)},\theta^{(k)}) +\lambda^{(k)} \nabla_x q(x^{(k)},\theta^{(k)}))].$$
    \STATE Update $\theta^{(t+1)}$ according to $$ \theta^{(t+1)} =\proj_{\Theta} [ \theta^{(k)} + \xi(\nabla_\theta  f(x^{(k)},\theta^{(k)}) +\lambda^{(k)} \nabla_\theta q(x^{(k)},\theta^{(k)}))].$$
\ENDFOR
\RETURN $x_T$, $\theta_T$.
\end{algorithmic}
\end{algorithm}



To implement this Algorithm, we first compute the gradient (or partial derivative) of the objective function $f$ and constraint function $q$ assuming $C(x)$ are differentiable.

The gradient of $f(x,\theta)$ w.r.t. $\theta$ is equivalent to the policy gradient of leader's value function, \ie, 
\begin{equation}
\label{eq:V_pi}
    \nabla_\theta f(x,\theta)=\nabla_{\theta} V_l(\bm{\pi}_{\theta})
    = \frac{1}{1-\gamma} \Expect_{\mathbf{s} \sim d^{\bm{\pi}_{\theta}}_\mu} \Expect_{\mathbf{a} \sim \bm{\pi}_{\theta}} [\nabla_{\theta} \log \bm{\pi}_{\theta}(\mathbf{a}|\mathbf{s}) Q(\mathbf{s},\mathbf{a},\bm{\pi}_\theta)].
\end{equation}

The policy gradient update is based on the REINFORCE estimator:
\begin{equation}
    \hat{\nabla}_{\theta} V(\bm{\pi}_{\theta}) = \frac{1}{m} \sum_{j=1}^m \sum_{k=1}^{|\tau_j|}
    \nabla_\theta \log
      \bm{ \pi}_\theta (\mathbf{a}^{(k)} \mid \mathbf{s}^{(k)})
     R(\tau_j),
\end{equation}
where $R\left(\tau^{(i)}\right)$ is the total return with the sampled trajectory  $\tau_j$, and $|\tau_j|$ is the length of sampled trajectory $\tau_j$.

Under direct parameterization,

\begin{equation}
\label{eq:log-pi-gradient}
    \frac{\partial \log \bm{\pi}_\theta(\mathbf{a}|\mathbf{s})}{\partial \theta_{i,s,a}} =\frac{\partial \log \pi_{\theta_i}(a_i|s_i)}{\partial \theta_{i,s,a}} = \begin{cases}
        1/\theta_{i,s,a} ,   & \text{ if } (s_i,a_i)=(s,a)\\
 0, & \text{ otherwise} 
    \end{cases}
\end{equation}

Under softmax parameterization,

\begin{equation}
\label{eq:log-pi-gradient-softmax}
    \frac{\partial \log \bm{\pi}_\theta(\mathbf{a}|\mathbf{s})}{\partial \theta_{i,s,a}} = \frac{\partial \log \pi_{\theta_i}(a_i|s_i)}{\partial \theta_{i,s,a}} = \begin{cases}
         \frac{1}{\tau}(1-\pi_{\theta_i}(s,a)),    & \text{ if } (s_i,a_i)=(s,a)\\
 - \frac{1}{\tau}  \pi_{\theta_i}(s,a), & \text{ if } s_i = s, \  a_i \not = a
 \\ 0, & \text{ otherwise} 
    \end{cases} 
\end{equation}

The gradient of the objective function w.r.t. $x$ is simply
        \begin{equation}
        \label{eq:f_x}
        \nabla_x f(x,\theta) = - w \cdot \nabla_x C(x),
        \end{equation}

To derive $\nabla_x q(x,\theta)$, we first provide the following lemma:
\begin{lemma}
\label{le:value-gradient-x}
Consider a reward function with side payments   $R(s,a;x) = \bar R(s,a)+ x(s,a)$ where $\bar R$ is the original reward in \ac{mdp} $M$, 
    the partial derivative of a value function $V(\mu,\pi_\theta;x)$ w.r.t. the side payments $x(s,a)$  is state-action visitation   $d^\pi_\mu(s,a)$. That is
    \[
     \frac{\partial V(\mu,\pi_\theta;x)}{\partial x(s,a)}= d^{\pi_\theta}_\mu(s,a).
    \]
\end{lemma}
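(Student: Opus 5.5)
The plan is to write the value function as a linear functional of the reward, with coefficients given by the visitation measure, and then differentiate in $x$. The key observation is that the policy $\pi_\theta$ is held fixed when we take the partial derivative with respect to $x(s,a)$. The side payments only enter the reward, not the transition kernel $P$ or the policy. So the law of the trajectory $(S^{(k)},A^{(k)})_{k\ge 0}$ under $\pi_\theta$ and $\mu$ does not depend on $x$ at all.

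\textbf{Step 1 (linear representation).} Expand the definition of $V(\mu,\pi_\theta;x)$ and exchange expectation and sum. This is justified by bounded rewards on the finite set $S\times A$ together with $\gamma<1$ (monotone or dominated convergence). It gives
\[
V(\mu,\pi_\theta;x)=\sum_{k=0}^\infty \gamma^k \sum_{(s',a')} {\Pr}^{\pi_\theta}(S^{(k)}=s',A^{(k)}=a'\mid S^{(0)}\sim\mu)\,\bigl(\bar R(s',a')+x(s',a')\bigr).
\]
Reordering the nonnegative, absolutely summable double series and using the definition of $d^{\pi_\theta}_\mu$ yields
\[
V(\mu,\pi_\theta;x)=\frac{1}{1-\gamma}\sum_{(s',a')} d^{\pi_\theta}_\mu(s',a')\bigl(\bar R(s',a')+x(s',a')\bigr).
\]

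\textbf{Step 2 (differentiate).} The representation in Step 1 is affine in the vector $x$, and the coefficients $d^{\pi_\theta}_\mu$ are independent of $x$. Hence the partial derivative with respect to the single coordinate $x(s,a)$ simply picks out the coefficient $d^{\pi_\theta}_\mu(s,a)/(1-\gamma)$. This is exactly the unnormalized discounted visitation $\sum_k\gamma^k{\Pr}^{\pi_\theta}(S^{(k)}=s,A^{(k)}=a)$. An alternative route is the Bellman equation: differentiate $Q=R+\gamma P^{\pi}Q$ to get $\partial Q/\partial x(s,a)=(I-\gamma P^{\pi})^{-1}e_{(s,a)}$. This leads to the same quantity, but the linear-representation route is cleaner.

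\textbf{Main obstacle (a normalization issue, not a technical difficulty).} With the paper's definition of $d^\pi_\mu$, which carries the factor $(1-\gamma)$, the derivative comes out as $d^{\pi_\theta}_\mu(s,a)/(1-\gamma)$ rather than $d^{\pi_\theta}_\mu(s,a)$. I would therefore state the lemma with one of two conventions. Either the visitation is understood as the unnormalized discounted occupancy $\sum_k\gamma^k\Pr(\cdot)$, or the normalized $d$ is kept and the factor $1/(1-\gamma)$ is made explicit. This matches the $\frac{1}{1-\gamma}$ appearing in (\ref{eq:V_pi}). The only analytic care needed is the interchange of limits in Step 1 and the observation that the trajectory distribution is $x$-independent.
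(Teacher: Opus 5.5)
Your proposal is correct and follows the paper's own route: the paper likewise writes $V(\mu,\pi_\theta;x)$ as the occupancy-weighted sum $\sum_{s,a} d^{\pi_\theta}_\mu(s,a)R(s,a;x)$ and differentiates this affine function of $x$. Your normalization remark is also right. The paper's proof uses that representation without the factor $1/(1-\gamma)$, even though its definition of $d^{\pi}_\mu$ includes $(1-\gamma)$, so under the paper's own conventions the derivative is $d^{\pi_\theta}_\mu(s,a)/(1-\gamma)$ (equivalently, the unnormalized discounted occupancy); this constant is harmless for the Lipschitz arguments in the appendix but should be carried into $\nabla_x q$ in equation~\ref{eq:q_x}.
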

The proof is given in Appendix \ref{app:le_value-gradient-x}.

 Given the policy $\pi$ and initial distribution $\mu$, we use $\mathbf{d}^\pi_\mu$ to denote the vector with the entries to be $d^\pi_\mu(s,a)$ for each $s \in S$ and $a \in A$. Then by Lemma \ref{le:value-gradient-x}, the gradient of $q(x,\theta)$ can be denoted as:    
        \begin{equation}
        \label{eq:q_x}
            \nabla_x q(x,\theta) = [\mathbf{d}^{\pi_{\theta_1}}_{\mu_1} - \mathbf{d}^{\pi_{\theta_1^\ast(x_1)}}_{\mu_1}, \ldots, \mathbf{d}^{\pi_{\theta_n}}_{\mu_n} - \mathbf{d}^{\pi_{\theta_n^\ast(x_n)}}_{\mu_n}],
        \end{equation} where $\theta_i^\ast(x_i)\in \argmax_{\theta_i} V_i(\pi_{\theta_i}; x_i)$, $i \in \mathbf{N}$. 



Because of the independence of the followers (i.e., $\theta_i$ only plays a role in $V_i$) and the fact that $g^\ast(x)$ is independent of $\theta$, we can express the partial derivative of $q(x,\theta)$ w.r.t. each policy parameter $\theta_{i,s,a}$ as follows:
        
        \begin{equation}
        \label{eq:q_pi}
            \frac{\partial q(x,\theta)}{\partial \theta_{i,s,a}} = \frac{\partial g(x,\theta)}{\partial \theta_{i,s,a}} = \frac{ \sum_{i \in \mathbf{N}}V_i(\pi_{\theta_i};x_i)}{\partial \theta_{i,s,a}} = \frac{\partial V_i(\pi_{\theta_i};x_i)}{\partial \theta_{i,s,a}} ,
        \end{equation} which can be computed using the policy gradient method in the follower $i$'s MDP.

\subsection{Convergence analysis}
\label{sec:convergence}

It is worth noting that the problem \ref{eq:ineq-constraint} can have multiple stationary points. We use a simple example to illustrate this.

\begin{example}

\begin{figure}[H]
\centering
\subfigure[MDP for the follower 1.]{
\scalebox{0.8}{
\begin{tikzpicture}[
->,
>=stealth,
auto,
node distance=3cm,
every state/.style={thick, fill=white, draw=black, text=black, scale=0.7}
]
\node[state] (S0) at (0,0) {$0$};
\node[state] (S1) at (2,1) {$1$};
\node[state] (S2) at (2,-1) {$2$};
\node[state] (S3) at (4,0) {Sink};
\path
(S0) edge node[above, sloped, font=\footnotesize] {$a_1, r=0$} (S1)
(S0) edge node[below, sloped, font=\footnotesize] {$a_2, r=0$} (S2)
(S1) edge node[above, sloped, font=\footnotesize] {$a_1,a_2, r=1$} (S3)
(S2) edge node[below, sloped, font=\footnotesize] {$a_1,a_2, r=0$} (S3)
(S3) edge[loop right] node[font=\footnotesize]{} (S3);
\end{tikzpicture}
}
}
\subfigure[MDP for the follower 2.]{
\scalebox{0.8}{
\begin{tikzpicture}[
->,
>=stealth,
auto,
node distance=3cm,
every state/.style={thick, fill=white, draw=black, text=black, scale=0.7}
]
\node[state] (S0) at (0,0) {$0$};
\node[state] (S1) at (2,1) {$1$};
\node[state] (S2) at (2,-1) {$2$};
\node[state] (S3) at (4,0) {Sink};
\path
(S0) edge node[above, sloped, font=\footnotesize] {$a_1, r=0$} (S1)
(S0) edge node[below, sloped, font=\footnotesize] {$a_2, r=0$} (S2)
(S1) edge node[above, sloped, font=\footnotesize] {$a_1,a_2, r=0$} (S3)
(S2) edge node[below, sloped, font=\footnotesize] {$a_1,a_2, r=2$} (S3)
(S3) edge[loop right] node[font=\footnotesize]{}  (S3);
\end{tikzpicture}
}
}
\caption{A small MDP example showing two followers with different reward functions.}
\label{fig:small_graph_mdp}

\end{figure}

Figure \ref{fig:small_graph_mdp} illustrates two followers' deterministic dynamics and reward functions in a small \ac{mdp} with infinite horizon. The state space $S$ is $\{0,1,2,\text{Sink}\}$, and action space is $\{a_1,a_2\}$. Both the leader and the followers receive a reward of 0 in the ``Sink'' state. Follower 1 and 2 has state 1 and state 2 as their respective goal states in the individual \ac{mdp}s. We set the leader's reward function to be state-dependent: The leader receives a reward of \(6\) when joint state is \((2,1)\). A reward of \(5\) is obtained when the joint state is either \((2,0)\) or \((0,1)\). For all other joint states, the leader’s reward is \(0\). The leader's reward can be understood as follows: The leader receives a reward of 6 if both followers reach each other’s goal states, 5 if only one does, and 0 if neither does.


   We assume that the leader can give non-negative side payments to follower 1 at state $2$, and follower 2 at state $1$. We also assume that the discounting factor in leader's value is $1$. We denote the side payments as $x=(x_1,x_2)$. The leader's cost function is chosen to be the summation of side payments.

In this case, for any fixed side payments \( x_1 \neq 1 \) and \( x_2 \neq 2 \), each follower has a unique best response. In other cases, although a follower may have multiple optimal solutions, the leader selects the one that maximizes its objective function. Consequently, the incentive design problem in this example reduces to a single-level optimization problem, which aims to maximize \(f_m(x) \triangleq \max_{\theta} f(x, \theta)\)
w.r.t. \( x \), where \( f(x,\theta) \) denotes the leader's objective function.
 
 Under direct parameterization, $f_m(x)$ can be expressed as the following piece-wise linear function:

\[
f_m(x) = 
\begin{cases}
    -x_1 - x_2, & \text{if } 0 \leq x_1 < 1,\; 0 \leq x_2 < 2 \\
    5 - x_1 - x_2, 
    & \text{if } 
        0 \leq x_1 < 1,\; x_2 \geq 2, 
        \text{or } x_1 \geq 1,\; 0 \leq x_2 < 2\\
    6 - x_1 - x_2, & \text{if } x_1 \geq 1,\; x_2 \geq 2
\end{cases}
\]
The stationary points of this function are $(0,0)$, $(1,0)$, $(0,2)$, and $(1,2)$, as illustrated in the contour plot shown in Figure~\ref{fi:fx_direct}.


Under softmax parameterization, choose $\theta_{s,a}^\ast = Q^\ast(s,a,x)$. We denote $\pi_1(a_1|0)=p_1$ for follower 1, and $\pi_2(a_1|0)=p_1$ for follower 2. Let $\tau = 0.05$, $\gamma_1=\gamma_2=1$, we have the following relation between the side payments $x$ and the probability $p_1,p_2$:

\[
p_1(x)= \frac{\exp(20)}{\exp(20)+\exp(20x_1)}, \quad p_2(x) = \frac{\exp(20x_2)}{\exp(20x_2)+\exp(40)}.
\]


Then, $f_m(x)=5+4p_1(x)p_2(x)-5p_1(x)+p_2(x) - (x_1 + x_2)$. Figure \ref{fi:fx_softmax} illustrates the function's contour plot, revealing multiple stationary points across its domain.

\begin{figure}[h]
    \centering
    \subfigure[$f_m(x)$ under direct parameterization.]{
        \includegraphics[width=0.3\linewidth]{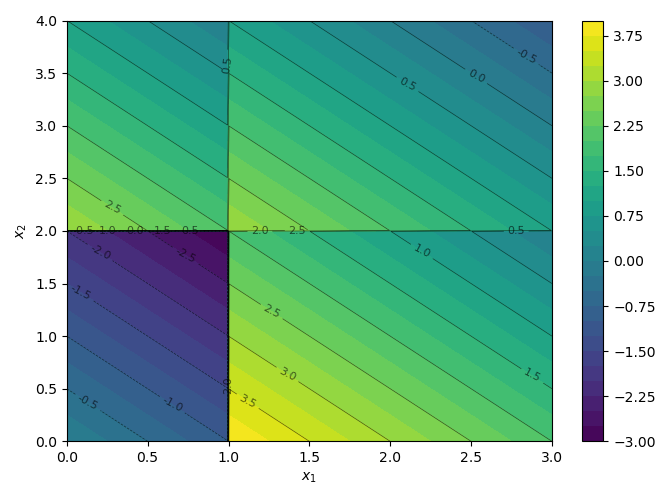}
        \label{fi:fx_direct}
    }%
    \hspace{0.05\linewidth}
    \subfigure[$f_m(x)$ under softmax parameterization.]{
        \includegraphics[width=0.3\linewidth]{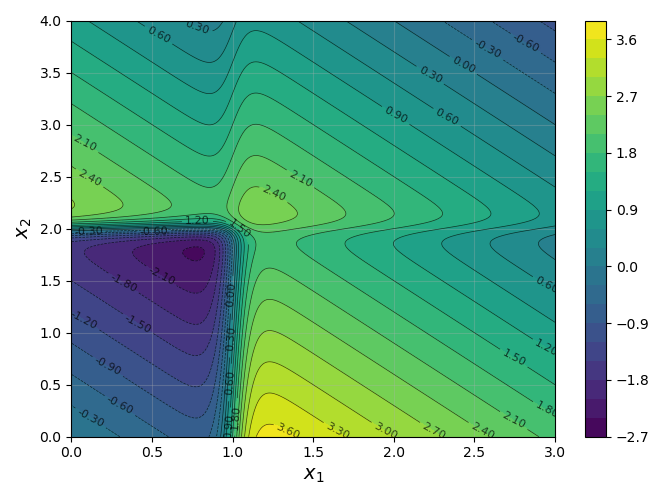}
         \label{fi:fx_softmax}
    }
    \caption{Contour plots of $f_m(x)$ under direct and softmax parameterizations.}
    \label{fi:contour_comparison}
\end{figure}

\end{example}

We aim to  prove that the limit of the sequence ${(\pi^{(k)},x^{(k)})}_{k=0}^\infty$ generated from the Algorithm~\ref{alg} is a stationary point of the bilevel optimization problem (\eqref{eq:ineq-constraint}) for both direct and softmax parameterization. Given that the value function is non-concave, as shown in Lemma 1 of \cite{agarwal2021theory}, our analysis focus on examining the smoothness properties of both the objective and constraint functions in \eqref{eq:ineq-constraint}.

To establish the proof, we begin by stating the following assumptions.
\begin{assumption}
\label{as:cost}
    The cost function $C\colon \mathcal{X} \rightarrow \reals_+$ is differentiable, and satisfies two conditions:
    \begin{itemize}
        \item $C$ and $\nabla C$ is Lipschitz continuous.
         \item $|C(x)|$ is upper bounded by a constant $\beta < \infty$.
    \end{itemize}
\end{assumption}

\begin{assumption}
\label{as:boundness}
   We assume that the reward functions of all the followers and the leader are bounded, and the norm of side payments $||x||$ is bounded, and as a result, given $\pi$, the leader's value function $V(\bm{\pi}_\theta)$, as well as each follower’s value function $V(\pi_\theta;x)$ for each follower are all bounded.
\end{assumption}

Recall that 
  $f(x,\theta) \triangleq V(\bm{\pi}_\theta) - C(x)$,  $g(x,\theta)  \triangleq  \sum_{i \in \mathbf{N}}V_i(\pi_i;x_i) $,  $g^\ast(x) \triangleq   \sum_{i \in \mathbf{N}}V_i^\ast(x_i)  $, and $q(x,\theta) \triangleq g(x,\theta) - g^\ast (x)$. 




We now derive three key corollaries in preparation for the final result.
\begin{corollary}
    \label{boundness}
    There exists a constant $0 < \beta < \infty$ such that $\norm{\nabla f(x,\theta)}$, $\norm{\nabla g(x,\theta)}$, $|f(x,\theta)|$ and $|g(x,\theta)|$ are all upper bounded by $\beta$ for any $(x,\pi)$.
\end{corollary}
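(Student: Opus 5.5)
The plan is to bound each of the four quantities separately and then take $\beta$ to be the maximum of the resulting constants. The function-value bounds come directly from Assumptions~\ref{as:cost} and~\ref{as:boundness}. The gradient bounds come from the explicit formulas of Section~\ref{se:alg and gradient} together with standard policy-gradient bounds for each parameterization.

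\textbf{Function values.} We have $|f(x,\theta)| \le |V_l(\bm{\pi}_\theta)| + w\,|C(x)|$. Assumption~\ref{as:boundness} bounds the first term uniformly in $\theta$; the bound is $R_{\max}/(1-\gamma)$ when $\gamma<1$. Assumption~\ref{as:cost} gives $|C(x)|\le\beta$ for the second. Similarly, $|g(x,\theta)| \le \sum_{i\in\mathbf{N}} |V_i(\pi_{\theta_i};x_i)|$. Since $\bar R_i$ is bounded and $\norm{x}$ is bounded, $R_i(\cdot;x_i)$ is bounded, so each $V_i$ is bounded. The sum of $n$ such terms is bounded as well.

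\textbf{Gradients in $x$.} By \eqref{eq:f_x}, $\nabla_x f = -w\nabla_x C(x)$. Assumption~\ref{as:cost} states that $C$ is Lipschitz, so $\norm{\nabla C}$ is at most the Lipschitz constant. By Lemma~\ref{le:value-gradient-x}, $\nabla_{x_i} g = \mathbf{d}^{\pi_{\theta_i}}_{\mu_i}$, which is a probability vector. Its $L^2$ norm is therefore at most $1$, and $\norm{\nabla_x g}\le \sqrt n$.

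\textbf{Gradients in $\theta$.} This is the main step. I would use the expression \eqref{eq:V_pi} (and its single-agent analogue for \eqref{eq:q_pi}), together with the explicit score functions \eqref{eq:log-pi-gradient} and \eqref{eq:log-pi-gradient-softmax}.

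\emph{Softmax case.} Each entry of $\nabla_\theta\log\bm\pi_\theta$ is bounded by $1/\tau$, and $Q$ is bounded by Assumption~\ref{as:boundness}. Summing over the finitely many state-action pairs gives a uniform bound on the gradient.

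\emph{Direct case.} Here the score $1/\theta_{i,s,a}$ is unbounded. I would instead cancel it against the factor $\pi_\theta(a|s)=\theta_{s,a}$ in the expectation, which yields
\[
\frac{\partial V}{\partial \theta_{s,a}} = \frac{1}{1-\gamma}\, d^{\pi_\theta}_\mu(s)\, Q(s,a,\pi_\theta).
\]
This is bounded by $\max|Q|/(1-\gamma)$. For the leader, the joint-policy version applies: the partial derivative with respect to $\theta_{i,s,a}$ is a marginal-visitation-weighted average of $Q$, again bounded by $\max|Q_l|/(1-\gamma)$.

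Summing over the finite index set $\mathbf{N}\times S\times A$ bounds $\norm{\nabla_\theta f}$ and $\norm{\nabla_\theta g}$.

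\textbf{Main obstacle.} The delicate points are twofold. The first is this cancellation for the direct parameterization. The second is the case $\gamma=1$ permitted by the definitions. There I would rely on Assumption~\ref{as:boundness}, which asserts the values are bounded (as in the sink-state example), and argue that $Q$ and the unnormalized visitation $\sum_k \Pr(\cdot)$ are bounded because the total expected return is finite. In the writeup I would likely simply restrict to $\gamma<1$ for the gradient bounds.
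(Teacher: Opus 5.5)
Your proposal is correct and takes essentially the approach the paper uses when it omits this proof as straightforward: values are bounded directly by Assumptions~\ref{as:cost}--\ref{as:boundness}, and gradients are bounded entrywise from the explicit formulas, as the paper later does in Appendix~\ref{app:prepare_2} with $|\partial V/\partial\theta_{s,a}|\le Q_{\max}/(1-\gamma)$ (direct) and $Q_{\max}/(\tau(1-\gamma))$ (softmax), while your cancellation of the $1/\theta_{s,a}$ score and your joint-policy bound for the leader supply details the paper leaves out. One correction does not affect the conclusion: with the normalized $d^\pi_\mu$ one has $V=\frac{1}{1-\gamma}\sum_{s,a}d^\pi_\mu(s,a)R(s,a;x)$, so $\nabla_{x_i}g=\mathbf{d}^{\pi_{\theta_i}}_{\mu_i}/(1-\gamma_i)$ (Lemma~\ref{le:value-gradient-x} drops this factor) and $\norm{\nabla_x g}\le\sqrt{n}/(1-\max_i\gamma_i)$ rather than $\sqrt{n}$; and at $\gamma=1$ a finite return does not by itself bound unnormalized visitation counts (e.g.\ zero-reward cycles), so restricting to $\gamma<1$, as all of the paper's $1/(1-\gamma)$ formulas implicitly do, is the right call.
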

\begin{proof}
The gradient computations are detailed in Section \ref{se:alg and gradient}. The proof then follows straightforwardly from Assumption \ref{as:cost} and Assumption \ref{as:boundness} and is therefore omitted. 
\end{proof}

\begin{corollary}
    \label{co:smoothness}
    $\nabla f$ and $\nabla g$ are Lipschitz continuous w.r.t. the joint inputs $(x,\theta)$.
\end{corollary}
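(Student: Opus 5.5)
We need $\nabla f$ and $\nabla g$ to be Lipschitz in $(x,\theta)$. Both functions split into blocks, so we check each block of the gradient separately. A vector map whose blocks are each Lipschitz in the joint variable is Lipschitz, with constant at most the square root of the sum of the squared block constants. For $f(x,\theta)=V_l(\bm{\pi}_\theta)-w\cdot C(x)$ the gradient separates as $\nabla_x f=-w\nabla C(x)$, which depends only on $x$, and $\nabla_\theta f=\nabla_\theta V_l(\bm{\pi}_\theta)$, which depends only on $\theta$.

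\textbf{Gradient of $f$.} By Assumption~\ref{as:cost}, $\nabla C$ is Lipschitz, so $\nabla_x f$ is Lipschitz. For $\nabla_\theta V_l$ we view the leader's problem as a single MDP on the finite joint spaces $\mathbf{S},\mathbf{A}$, with bounded reward (Assumption~\ref{as:boundness}) and policy $\bm{\pi}_\theta$. We then invoke the standard smoothness results for value functions (Agarwal et al.). Under direct parameterization, $V$ is $\frac{2\gamma |A| R_{\max}}{(1-\gamma)^3}$-smooth in $\theta$. Under softmax, $V$ is $\frac{8R_{\max}}{\tau^2(1-\gamma)^3}$-smooth, with the factor $1/\tau^2$ coming from the temperature.

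Two points need care here. First, the joint policy is a product, and the joint direct parameter $\bm{\pi}(\mathbf{a}|\mathbf{s})=\prod_i\theta_{i,s_i,a_i}$ is a multilinear, smooth map of the per-follower parameters $\theta$ on the compact simplex product. Composing a smooth $V_l$ in the joint-policy variables with this smooth map, whose Jacobian and Hessian are bounded on the compact domain, gives a Lipschitz gradient. The same composition argument works for softmax, since the joint softmax is the product of the individual softmaxes, and each has bounded derivatives up to second order. Second, these results need $\gamma<1$, which we assume, as the bounded-value hypothesis in Assumption~\ref{as:boundness} implicitly requires.

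\textbf{Gradient of $g$.} Since $g=\sum_i V_i(\pi_{\theta_i};x_i)$ and follower $i$ depends only on $(x_i,\theta_i)$, it suffices to show that $\nabla_{(x_i,\theta_i)}V_i$ is Lipschitz for each $i$.
\begin{itemize}
\item The $\theta_i$-block $\nabla_{\theta_i}V_i(\pi_{\theta_i};x_i)$ is Lipschitz in $\theta_i$ for fixed $x_i$, by the same MDP smoothness bound with reward $R_{\max}+\norm{x}_\infty$. This is uniformly bounded by Assumption~\ref{as:boundness}.
\item For the dependence on $x_i$: $V_i$ is affine in $x_i$, namely $V_i=\frac{1}{1-\gamma_i}\langle \mathbf{d}^{\pi_{\theta_i}}_{\mu_i},\bar R_i+x_i\rangle$, since Lemma~\ref{le:value-gradient-x} gives $\nabla_{x_i}V_i=\mathbf{d}^{\pi_{\theta_i}}_{\mu_i}$ (up to the normalization used there). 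Hence $\nabla_{\theta_i}V_i$ is linear in $x_i$, with coefficient the Jacobian of $\theta_i\mapsto \mathbf{d}^{\pi_{\theta_i}}_{\mu_i}$. So the $\theta_i$-block is Lipschitz in $x_i$ provided this Jacobian is bounded.
\item The $x_i$-block is $\mathbf{d}^{\pi_{\theta_i}}_{\mu_i}$ itself, which is independent of $x_i$. It is Lipschitz in $\theta_i$ under the same bounded-Jacobian condition.
\end{itemize}

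\textbf{Main obstacle.} The key step is to show that $\theta\mapsto \mathbf{d}^{\pi_\theta}_\mu$ has a bounded Jacobian. We write $\mathbf{d}^{\pi_\theta}_\mu=(1-\gamma)\,\mathrm{diag}(\pi_\theta)\text{-weighted }\,\mu^\top(I-\gamma P_{\pi_\theta})^{-1}$ and differentiate using $\partial (I-\gamma P_\pi)^{-1}=\gamma(I-\gamma P_\pi)^{-1}(\partial P_\pi)(I-\gamma P_\pi)^{-1}$. This yields a bound of order $\frac{C|A|}{1-\gamma}$ for direct parameterization and $\frac{C}{\tau(1-\gamma)}$ for softmax, since $\partial P_\pi$ is bounded by the policy derivative. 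Alternatively, we can read this off from the value-function smoothness bound applied to indicator rewards $R=\mathbf{1}_{(s,a)}$.

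Once every mixed block is bounded, we conclude by the triangle inequality: $\norm{\nabla g(x,\theta)-\nabla g(x',\theta')}\le \norm{\nabla g(x,\theta)-\nabla g(x,\theta')}+\norm{\nabla g(x,\theta')-\nabla g(x',\theta')}$, and each term is bounded by a constant times $\norm{(x,\theta)-(x',\theta')}$.
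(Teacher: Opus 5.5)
Your proof is correct, and its skeleton matches the paper's: blockwise reduction, Assumption~\ref{as:cost} for $\nabla_x f$, the Agarwal et al.\ smoothness bounds for the $\theta$--$\theta$ blocks, and Lipschitz continuity of $\theta\mapsto\mathbf{d}^{\pi_\theta}_\mu$ for the $x$-block of $\nabla g$. You depart from the paper in two sub-steps.

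For $\nabla_\theta V_l$, the paper applies Lemma~53 of Agarwal et al.\ directly to the product policy, bounding $\sum_{\mathbf a}|d\bm{\pi}_\alpha/d\alpha|$ and $\sum_{\mathbf a}|d^2\bm{\pi}_\alpha/d\alpha^2|$ by hand. You instead compose the joint direct-parameterization smoothness with the product map $\theta\mapsto\prod_i\pi_{\theta_i}$ via the chain rule. Your route is more modular, but it inherits the factor $|\mathbf A|=|A|^n$ from the joint direct bound. The paper's route keeps the constant polynomial in $n$ (e.g.\ $C_1\le n|A|$). For the qualitative claim this difference is irrelevant.

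For the $x$-dependence of $\nabla_\theta g$, the paper uses the policy-gradient formula (linear in $Q$), $Q=R+\gamma PV$, and Lemma~\ref{le:value-gradient-x}. You use $V_i=\frac{1}{1-\gamma_i}\langle\mathbf{d}^{\pi_{\theta_i}}_{\mu_i},\bar R_i+x_i\rangle$, which gives $\nabla_{\theta_i}V_i=\frac{1}{1-\gamma_i}J_{\mathbf d}(\theta_i)^\top(\bar R_i+x_i)$. Both mixed blocks, which are transposes of each other, then reduce to one fact: a uniform bound on the Jacobian of $\theta\mapsto\mathbf{d}^{\pi_\theta}_\mu$. That unification is cleaner and shows exactly where bounded $\norm{x}$ enters.

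Your resolvent estimate gives that Jacobian bound uniformly, even on the unbounded softmax domain. In your parenthetical alternative, however, the indicator-reward MDP must supply a bound on the policy gradient itself (the paper's Lemma~\ref{le:value-l-lip} route), not the smoothness constant. A Lipschitz gradient does not bound the gradient on an unbounded domain. Finally, your $\gamma<1$ caveat is implicitly needed by the paper's argument as well.
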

\begin{proof}
    See the proof in Appendix \ref{app: corollary}.
\end{proof}

\begin{corollary}
\label{co:PL-g}
     Given a set of followers' optimal policies $\{\pi_{\theta^\ast_i}\}_{i=1}^n$, $\nabla_{\theta} g(x,\theta)$ satisfies the PL-inequality w.r.t. $\theta$. i.e. there exists $\kappa>0$ such that for any $(x,\theta)$,

    \[
    \norm{\nabla_\theta g(x,\theta)}^2 \geq \kappa (g(x,\theta^\ast) - g(x,\theta)).
    \]
\end{corollary}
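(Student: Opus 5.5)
The plan is to reduce the statement to per-follower \emph{gradient domination} inequalities for MDP value functions, and then to assemble them using the separable structure of $g$. Because $g(x,\theta)=\sum_{i\in\mathbf{N}}V_i(\pi_{\theta_i};x_i)$ and each $\theta_i$ enters only $V_i$, we have
\[
\norm{\nabla_\theta g(x,\theta)}^2=\sum_{i\in\mathbf{N}}\norm{\nabla_{\theta_i}V_i(\pi_{\theta_i};x_i)}^2
\quad\text{and}\quad
g(x,\theta^\ast)-g(x,\theta)=\sum_{i\in\mathbf{N}}\bigl(V_i^\ast(x_i)-V_i(\pi_{\theta_i};x_i)\bigr).
\]
So it suffices to prove, for each follower $i$, that $\norm{\nabla_{\theta_i}V_i}^2\ge\kappa_i\,(V_i^\ast-V_i)$, and then take $\kappa=\min_i\kappa_i$. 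The side payment $x_i$ only shifts the reward. By Assumption~\ref{as:boundness} the shifted reward is uniformly bounded, so every constant below can be made independent of $x$.

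\textbf{Step 1 (direct parameterization).} I will invoke the performance-difference-based gradient domination lemma of \cite{agarwal2021theory}:
\[
V_i^\ast-V_i(\pi_{\theta_i})\le\frac{1}{1-\gamma_i}\norm{d^{\pi^\ast}_{\mu_i}/\mu_i}_\infty\max_{\bar\theta\in\Theta_i}(\bar\theta-\theta_i)^\top\nabla_{\theta_i}V_i .
\]
This needs $\mu_i$ to have full support, which I will assume so that the mismatch coefficient $D_i$ is finite. By Cauchy--Schwarz and the fact that the product of simplices has diameter $\sqrt{2|S|}$, this gives $V_i^\ast-V_i\le c_i\norm{\nabla_{\theta_i}V_i}$ with $c_i=\sqrt{2|S|}\,D_i/(1-\gamma_i)$.

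\textbf{Step 2 (softmax parameterization).} Here I will use the nonuniform Łojasiewicz inequality of Mei et al.:
\[
\norm{\nabla_{\theta_i}V_i}\ge\frac{\min_s\pi_{\theta_i}(a^\ast(s)|s)}{\tau\sqrt{|S|}\,\norm{d^{\pi^\ast}_{\mu_i}/d^{\pi_{\theta_i}}_{\mu_i}}_\infty}\,(V_i^\ast-V_i).
\]
The $1/\tau$ factor comes from the temperature. To obtain a uniform constant I need $\pi_{\theta_i}(a^\ast|s)$ to be bounded away from zero. I will secure this by restricting $\Theta_i$ to a compact box, which is consistent with the projection $\proj_\Theta$ in Algorithm~\ref{alg}. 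On that box every action probability is at least some $c>0$, and $d^{\pi_{\theta_i}}_{\mu_i}\ge(1-\gamma_i)\mu_i$ keeps the mismatch coefficient finite.

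\textbf{Step 3 (from Łojasiewicz to PL), the expected main obstacle.} Both steps yield the linear-in-norm bound $V_i^\ast-V_i\le c_i\norm{\nabla V_i}$. Squaring it gives $\norm{\nabla V_i}^2\ge c_i^{-2}(V_i^\ast-V_i)^2$, which is weaker than the stated PL inequality whenever the gap is small.

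My first attempt to close this is to exploit the structure of the optimum. For direct parameterization the unprojected gradient at a boundary optimum is generally nonzero: its components are $d^\pi(s)Q^\ast(s,a)/(1-\gamma)$, whose norm is bounded below by a positive multiple of $\max_{s,a}|Q(s,a)|\,\min_s\mu_i(s)$. So I will try to show that $\norm{\nabla_{\theta_i}V_i}$ is bounded below by some $\delta_i>0$ uniformly over the feasible set. If that holds, the bounded gap $V_i^\ast-V_i\le 2\beta$ from Corollary~\ref{boundness} gives $\norm{\nabla V_i}^2\ge\delta_i\norm{\nabla V_i}\ge(\delta_i/c_i)(V_i^\ast-V_i)$, so one can take $\kappa_i=\delta_i/c_i$.

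This lower bound on the gradient norm is the delicate point, since it requires the $Q$-values not to vanish. One possible route is to add a constant to the rewards, which leaves the argmax unchanged. For softmax no such lower bound is available, because $\nabla V=0$ at every stationary point. In that case I would instead use the compactness from Step 2 together with the fact that the gap and the gradient vanish simultaneously only at the optimum, and argue by a ratio-continuity or compactness argument near that point.
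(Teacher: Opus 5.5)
There is a genuine gap, and it is Step~3, which is where the whole corollary lives. Your reduction to per-follower inequalities and Steps~1--2 are fine. As you note, $V_i^\ast-V_i\le c_i\norm{\nabla_{\theta_i}V_i}$ does not give PL. What is needed is a uniform bound $\norm{\nabla_{\theta_i}V_i}\ge\delta_i>0$. Once you have it, the bounded gap $V_i^\ast-V_i\le M$ gives $\kappa_i=\delta_i^2/M$ directly, so Steps~1--2 are not even needed.

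\textbf{Direct case.} Your suggested route fails. Shifting the rewards by a constant $c$ leaves the gap unchanged but changes the unprojected gradient $d^{\pi}_{\mu_i}(s)Q^{\pi}(s,a)/(1-\gamma_i)$, since $Q^\pi$ shifts by $c/(1-\gamma_i)$. You would be proving the inequality for a different function. The missing ingredient is the Bellman identity $r(s,a)=Q^\pi(s,a)-\gamma_i\sum_{s'}P_i(s'|s,a)V^\pi(s')$ with $r=\bar R_i+x_i$. It gives $\norm{r}_\infty\le(1+\gamma_i)\max_{s,a}|Q^\pi(s,a)|$ for every policy. Combined with $d^\pi_{\mu_i}(s)\ge(1-\gamma_i)\mu_i(s)$, it yields $\norm{\nabla_{\theta_i}V_i}\ge\min_s\mu_i(s)\,\norm{\bar R_i+x_i}_\infty/(1+\gamma_i)$. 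This is uniform in $x$ only if $\norm{\bar R_i+x_i}_\infty$ is bounded away from zero on $\mathcal{X}$; upper boundedness does not suffice, contrary to your opening remark. That assumption cannot be dropped: replacing $r$ by $\epsilon r$ scales $\norm{\nabla V}^2$ by $\epsilon^2$ but the gap only by $\epsilon$.

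\textbf{Softmax case.} Your claim that no lower bound is available is wrong on the box you introduce. With full-support $\mu_i$, $\nabla_{\theta_i}V_i=0$ at a finite $\theta_i$ forces $\pi(a|s)A^\pi(s,a)=0$, so $A^\pi\equiv 0$. Then $\pi$ is optimal with all actions tied at every state, so $V_i$ is constant in $\theta_i$ and the gap is identically zero. Otherwise $\norm{\nabla_{\theta_i}V_i}$ is continuous and nowhere zero on the compact box, hence at least some $\delta_i>0$. This is uniform in $x$ if $\mathcal{X}$ is compact and the non-degeneracy holds for every $x$. Again $\kappa_i=\delta_i^2/M$.

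In particular, the box maximizer lies on the boundary, where the gradient does not vanish, so there is no common zero of gap and gradient to analyze. A bare ratio-continuity argument at such a zero would not give PL anyway, since degenerate maxima can drive $\norm{\nabla V}^2/(V^\ast-V)$ to zero.

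You should also state explicitly that the box modifies the followers' problem. It is not implied by $\proj_{\Theta}$, which is vacuous for the paper's unconstrained softmax. Without the box, the inequality is false when $g(x,\theta^\ast)$ is read as the optimal value. For example, a two-armed bandit with rewards $1,0$ and $\tau=1$ gives $\norm{\nabla V}^2=2p^2(1-p)^2$ against a gap of $1-p$, where $p$ is the probability of the better arm.

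For comparison, the paper's proof is exactly the $\beta^2/M$ argument, but its $\beta$ is defined at each $\theta$, so it does not supply a uniform $\kappa$ either. The two arguments above close that hole, under full-support initial distributions and rewards bounded away from zero.
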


\begin{proof}
See the proof in Appendix \ref{app:co_PL}.
\end{proof}

The measure of stationary, introduced in \cite{liu2022bome}, is adapted to our setting  as follows.
\begin{definition}
\label{def:K}
    The measure of stationary is defined as
    \[
    \mathcal{K}(x,\theta) =\optmins_{\lambda \geq 0} \norm{\nabla f(x,\theta) + \lambda \nabla q(x,\theta)} - q(x,\theta) .
    \]
\end{definition}

The following theorem, paraphrases Theorem 2 in \citep{liu2022bome}   guarantees that the algorithm generates a sequence ${(\theta^{(k)},x^{(k)})}_{k=0}^\infty$ that satisfies $\mathcal{K}(\theta^{(k)},x^{(k)})$ $\rightarrow 0$  as $k \rightarrow 0$ for the bilevel optimization problem \ref{eq:ineq-constraint},.  
\begin{theorem}
\label{strong-duality}
Consider Algorithm \ref{alg} with $\xi \leq 1/L$ (where L is the L-Lipschitz constant defined in Corollary \ref{co:smoothness}). With Assumption \ref{as:cost}, Assumption \ref{as:boundness}, and that $q(x,\theta)$ is differentiable on $(x,\theta)$ at every iteration $k \geq 0$. Then there exists a constant $c$ depending on $\xi$, $\kappa$ (where $\kappa$ is the PL inequality constant defined in Corollary \ref{co:PL-g}), $\eta$, $L$, such that, we have for any iteration numbers $K \geq 0$
\[
\min_{k \leq K} \mathcal{K} (\theta^{(k)},x^{(k)}) = O \left(\sqrt{\xi} + \sqrt{\frac{1}{\xi K}}\right),
\]
where $b$ is a positive constant depending on $\kappa$, $L$, and $\xi$. 
\end{theorem}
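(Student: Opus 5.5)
The plan is to treat Theorem~\ref{strong-duality} as an instance of Theorem~2 of \cite{liu2022bome}. The main work is to check, one by one, that problem~\ref{eq:ineq-constraint} with $f$, $g$, $g^\ast$, $q$ satisfies the hypotheses used there. Then I will transcribe their descent argument in the maximization form used in Algorithm~\ref{alg}. The hypotheses of \cite{liu2022bome} are:
\begin{itemize}
\item[(i)] $f$ and $g$ are bounded, with bounded gradients;
\item[(ii)] $\nabla f$ and $\nabla g$ are $L$-Lipschitz in the joint variable;
\item[(iii)] $g(x,\cdot)$ satisfies a PL inequality in $\theta$, uniformly in $x$;
\item[(iv)] $q$ is differentiable along the iterates.
\end{itemize}
Item (i) is Corollary~\ref{boundness}, item (ii) is Corollary~\ref{co:smoothness}, item (iii) is Corollary~\ref{co:PL-g} with constant $\kappa$, and item (iv) is assumed in the statement.

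Given these, the argument proceeds in three steps.

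\textbf{Step 1 (control of $q$).} Write $\Phi^{(k)} = \nabla f + \lambda^{(k)}\nabla q$ evaluated at $(x^{(k)},\theta^{(k)})$. By the choice of $\lambda^{(k)}$ we have $\langle \Phi^{(k)}, \nabla q\rangle \ge \eta\norm{\nabla q}^2$; this is exactly how $\lambda^{(k)}$ is chosen in Algorithm~\ref{alg}. The constraint function $q = g - g^\ast$ has an $L$-Lipschitz gradient: $g^\ast$ is a maximum over $\theta$ of $L$-smooth functions, and under PL it is differentiable with gradient $\nabla_x g(x,\theta^\ast(x))$ (Danskin), consistent with \eqref{eq:q_x}. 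A one-step ascent lemma with $\xi\le 1/L$ then gives
\[
q^{(k+1)} \ge q^{(k)} + \xi\eta\norm{\nabla q^{(k)}}^2 - O\!\left(\xi^2\norm{\Phi^{(k)}}^2\right).
\]
Since $\norm{\nabla q}^2 \ge \norm{\nabla_\theta g}^2 \ge \kappa(-q)$ by Corollary~\ref{co:PL-g}, the sequence $-q^{(k)}$ contracts geometrically, at rate $(1-\xi\eta\kappa)$, up to an $O(\xi)$ floor. Hence $-q^{(k)} = O(\xi)$ eventually, and the $-q$ term in $\mathcal K$ contributes $O(\sqrt{\xi})$ or better.

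\textbf{Step 2 (telescoping on $f$).} Using the $L$-smoothness of $f$,
\[
f^{(k+1)} \ge f^{(k)} + \xi\langle \nabla f, \Phi^{(k)}\rangle - \tfrac{L\xi^2}{2}\norm{\Phi^{(k)}}^2 .
\]
Combine this with $\lambda^{(k)}$ times the $q$-inequality. Because $\Phi^{(k)}$ is the minimum-norm combination when the max in $\lambda^{(k)}$ is active, the combined increment is at least $\xi\norm{\Phi^{(k)}}^2$ minus $O(\xi^2)$ terms. Summing over $k\le K$ and using the boundedness of $f$ (constant $\beta$) yields
\[
\min_{k\le K}\norm{\Phi^{(k)}}^2 = O\!\left(\tfrac{1}{\xi K} + \xi\right).
\]

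\textbf{Step 3 (conclusion).} Since $\min_{\lambda\ge 0}\norm{\nabla f + \lambda\nabla q} \le \norm{\Phi^{(k)}}$, taking square roots and adding the Step~1 bound on $-q$ gives the stated $O(\sqrt{\xi} + \sqrt{1/(\xi K)})$. The constant depends only on $\xi,\kappa,\eta,L$ and $\beta$.

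\textbf{Main obstacle.} I expect the hardest part to be the smoothness of $q$, i.e.\ of $g^\ast$. It requires Lipschitz continuity of $x\mapsto \mathbf d^{\pi_{\theta^\ast(x)}}$, yet the optimal policy can jump, as shown in the example. I would first try to sidestep this as \cite{liu2022bome} do: use only one-sided descent inequalities for $q$, valid via the PL-based bound $g^\ast(x') \le g^\ast(x) + \langle \nabla g^\ast(x), x'-x\rangle + O(\norm{x'-x}^2)$, which follows from PL plus $L$-smoothness of $g$. Failing that, I would invoke the theorem's explicit differentiability assumption on $q$ along the iterates.

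A second delicate point is the projections onto $\mathcal X$ and $\Theta$. I would handle them with the standard projected-gradient-mapping version of the ascent lemma, replacing $\Phi^{(k)}$ by the gradient mapping.
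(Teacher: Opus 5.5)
Your hypothesis check is exactly the paper's proof. The paper only observes that Corollaries~\ref{boundness}, \ref{co:smoothness} and~\ref{co:PL-g}, plus the assumed differentiability of $q$, put problem~\ref{eq:ineq-constraint} under Theorem~2 of \cite{liu2022bome}, and stops there.

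The gap is in the descent argument you then transcribe. In Step~2 the sum $\sum_{k}\lambda^{(k)}\bigl(q^{(k+1)}-q^{(k)}\bigr)$ does not telescope, because the weight $\lambda^{(k)}$ changes with $k$, and the boundedness of $f$ says nothing about it. Nor is $\lambda^{(k)}$ bounded: it can be as large as $\eta+\norm{\nabla f}/\norm{\nabla q}$, and in the smooth unconstrained setting of your Steps~1--2, $\nabla q\to 0$ near feasibility, since feasible points maximize $q$. So the remainder $\lambda^{(k)}L\xi^2\norm{\Phi^{(k)}}^2/2$ is not $O(\xi^2)$ either.

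What does work is to use the $f$-inequality alone. If $\lambda^{(k)}>0$ then $\lambda^{(k)}\langle\nabla q,\Phi^{(k)}\rangle=\eta\bigl(\eta\norm{\nabla q}^2-\langle\nabla f,\nabla q\rangle\bigr)$. Hence in all cases $\langle\nabla f,\Phi^{(k)}\rangle\ge\norm{\Phi^{(k)}}^2-\eta^2\norm{\nabla q}^2-\eta\beta\norm{\nabla q}$. Then $\sum_{k<K}\norm{\nabla q^{(k)}}$ is controlled by Cauchy--Schwarz and the summed Step~1 inequality $\xi\eta\sum_{k<K}\norm{\nabla q^{(k)}}^2\le -q^{(0)}+O(\xi^2K)$. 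The term linear in $\norm{\nabla q}$ costs a square root: what you can prove is $\frac{1}{K}\sum_{k<K}\norm{\Phi^{(k)}}^2=O\bigl(\sqrt{\xi}+\sqrt{1/(\xi K)}\bigr)$, not $O\bigl(\xi+1/(\xi K)\bigr)$.

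Average this jointly with $-q^{(k)}$, rather than taking two separate minima that may occur at different $k$ as your Step~3 does. That gives the stated rate for $\min_{\lambda\ge 0}\norm{\nabla f+\lambda\nabla q}^2-q$, which is the squared-norm measure of \cite{liu2022bome}. For the unsquared $\mathcal{K}$ of Definition~\ref{def:K}, this route only yields $O\bigl(\xi^{1/4}+(\xi K)^{-1/4}\bigr)$. So your Step~2 claims more than the cited theorem provides, and the statement's rate should be read with the squared norm, which is all the paper's citation actually supports.

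Two of your fallbacks also do not go through as stated, although the paper's one-line proof silently leans on the same points.

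First, differentiability of $q$ at the iterates does not give the ascent inequality of Step~1. That inequality needs $\nabla q$ to be Lipschitz along the segment between consecutive iterates, with a constant of order $L+L^2/\kappa$, not $L$. Your one-sided bound $g^\ast(x')\le g^\ast(x)+\langle\nabla g^\ast(x),x'-x\rangle+O(\norm{x'-x}^2)$ is precisely what fails at the kinks you point to. Each $V_i(\pi_{\theta_i};x_i)$ is affine in $x_i$, so $g^\ast$ is a pointwise maximum of affine functions. It is therefore convex, with a jump in slope wherever optimal policies with different occupancy measures coexist. The bound is available only through a genuinely uniform PL constant, i.e.\ it is only as good as Corollary~\ref{co:PL-g}.

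Second, the projected-gradient-mapping ascent lemma treats one function at a time. Step~1 instead needs $\langle\cdot,\nabla q\rangle\ge\eta\norm{\nabla q}^2$ for the actual projected step, and the projection onto $\mathcal{X}\times\Theta$ in Algorithm~\ref{alg} can destroy that inequality.
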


\begin{proof}
     With Corollary \ref{boundness}, Corollary \ref{co:smoothness}, and Corollary \ref{co:PL-g}, the optimization problem satisfies the condition for convergence of a constrained optimization reformation for bi-level optimization problems~\citep{liu2022bome}.
\end{proof}

\section{Experiment results}



\begin{wrapfigure}{r}{0.3\linewidth}  
    \centering
    \vspace{-10pt}
    \includegraphics[width=\linewidth]{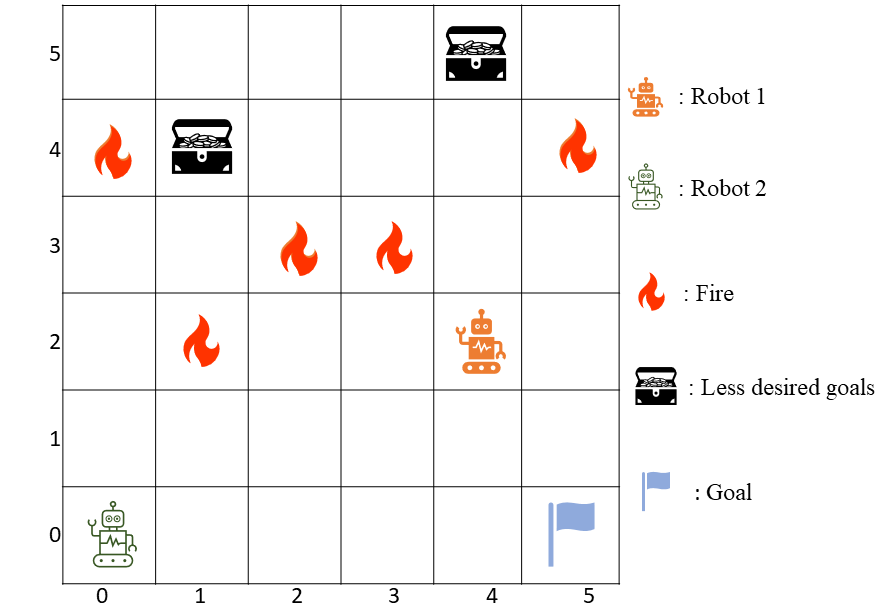}
    \vspace{-15pt}
    \caption{A $6\times 6$ gridworld.}
    \label{fi:gw}
\end{wrapfigure}

We consider a multi-agent stochastic  gridworld shown in Figure ~\ref{fi:gw}. The robots can move in four compass directions. Given an action, say, ``N'', robot 1 (resp. robot 2) enters its intended cell with a probability of $1 - 2\alpha_1$ (resp.  $1 - 2\alpha_2$), while entering the neighboring cells (west and east) with probabilities $\alpha_1$ (resp. $\alpha_2$). The original reward structure for the robots is state-based: they receive $-5$ at fire states, $8$ at less desired goal states, and $10$ at the goal state. The two robots with different dynamics aims to maximize the total reward. After the robot reaches either the goal state or a less desired goal state, it reaches a state `Sink' with probability 1 and its interaction with the gridworld is terminated.

\begin{figure}[h]
    \centering
    \subfigure[Convergence result for initialization 1.]{
        \includegraphics[width=0.304\columnwidth]{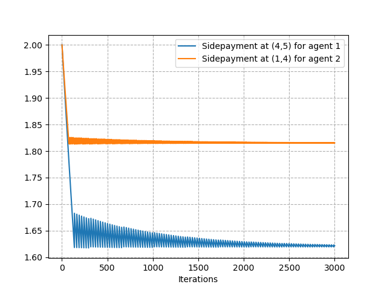}
        \label{fi:1}
    }%
    \subfigure[Convergence result for initialization 2.]{
        \includegraphics[width=0.304\columnwidth]{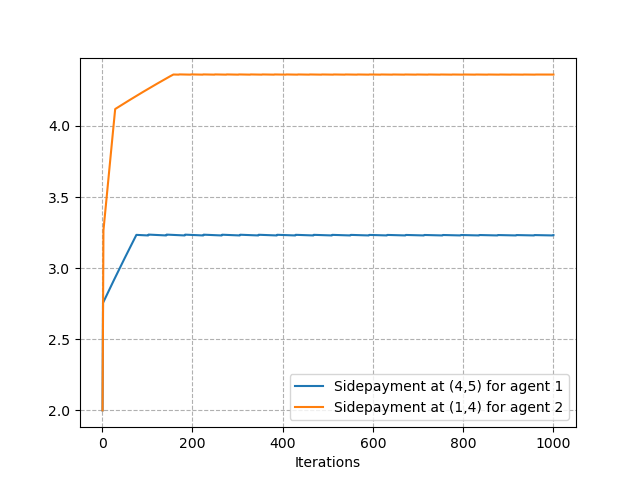}
        \label{fi:2}
    }%
    \subfigure[Convergence result for initialization 3.]{
        \includegraphics[width=0.304\columnwidth]{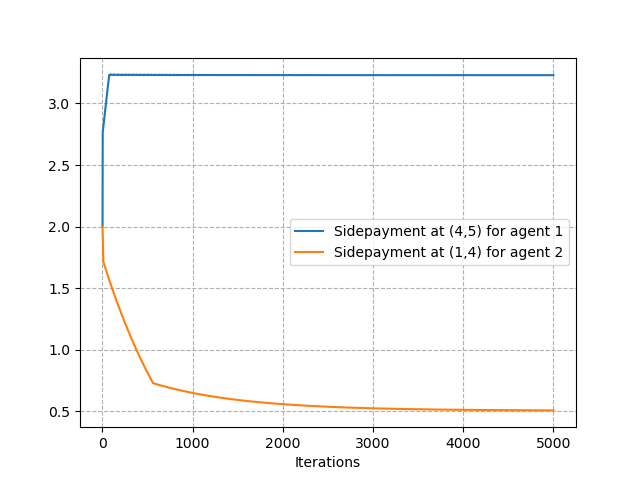}
        \label{fi:3}
    }
    \caption{Experiment environment and the convergence plots for different initializations under direct parameterization.}
    \label{fi:conv-plots}
\end{figure}

\begin{figure}[h]
    \centering
    \subfigure[Convergence (direct parameterization).]{
        \includegraphics[width=0.295\columnwidth]{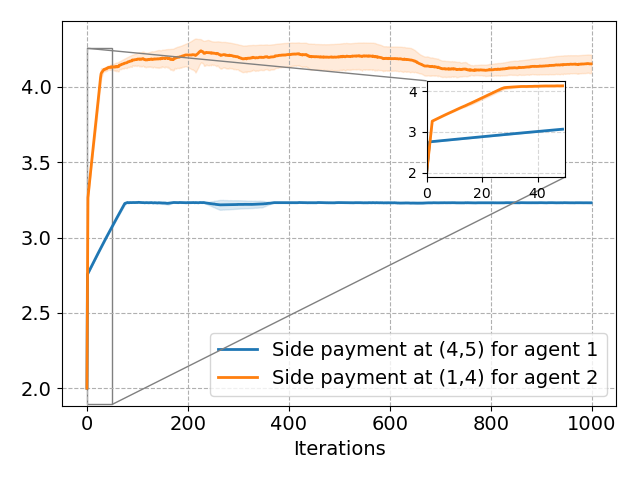}
        \label{fi:convergence_direct}
    }
    \subfigure[Feasibility (direct parameterization).]{
        \includegraphics[width=0.295\columnwidth]{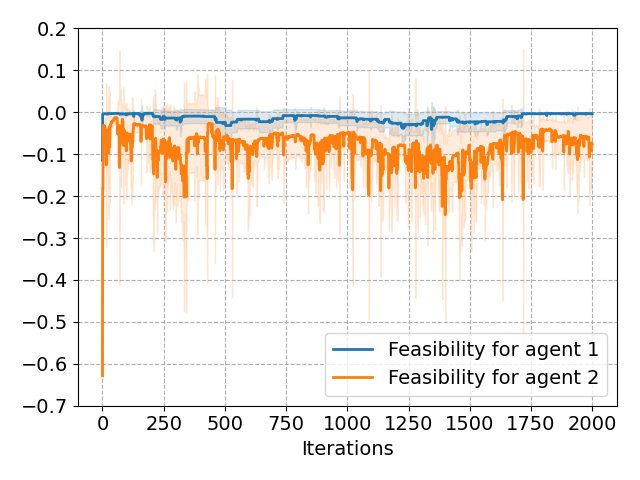}
        \label{fi:feasibility_direct}
    }
    \subfigure[Leader's objective (direct parameterization).]{
        \includegraphics[width=0.295\columnwidth]{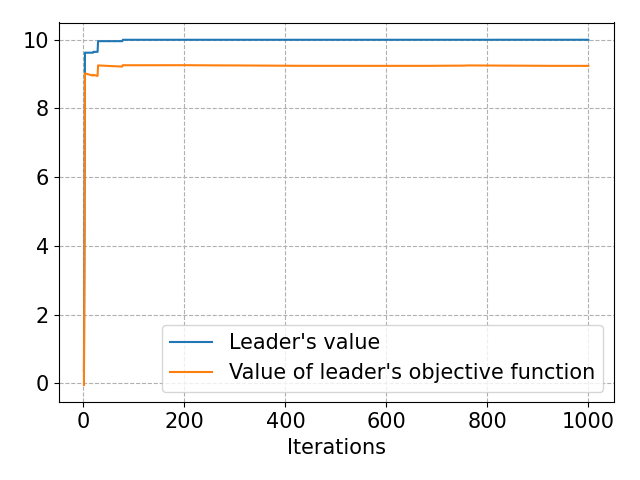}
        \label{fi:optimality_direct}
    }

    \subfigure[Convergence (softmax parameterization).]{
        \includegraphics[width=0.295\columnwidth]{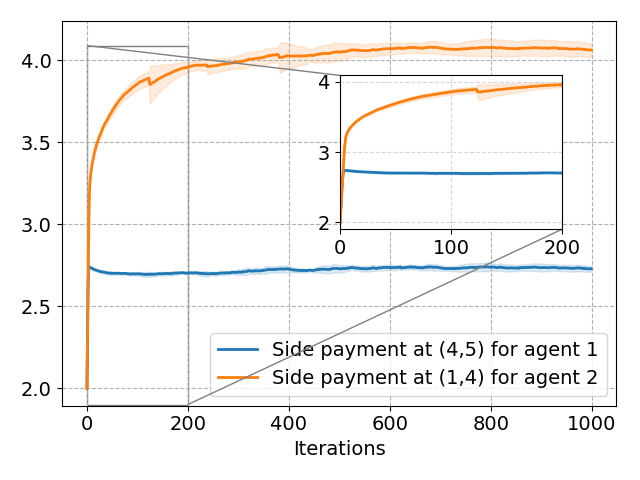}
        \label{fi:convergence_softmax}
    }
    \subfigure[Feasibility (softmax parameterization).]{
        \includegraphics[width=0.295\columnwidth]{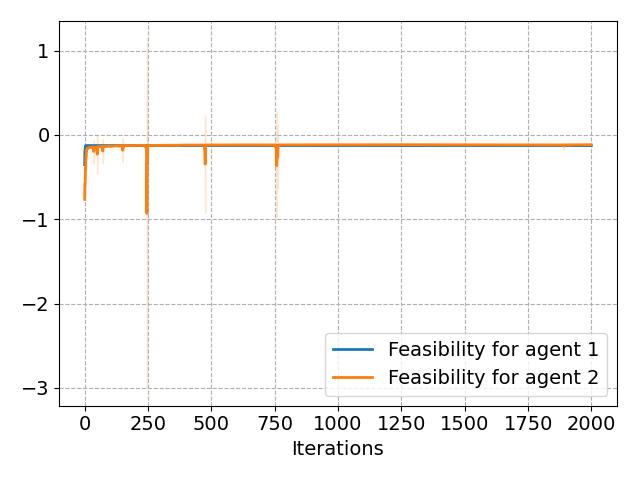}
        \label{fi:feasibility_softmax}
    }
    \subfigure[Leader's objective (softmax parameterization).]{
        \includegraphics[width=0.295\columnwidth]{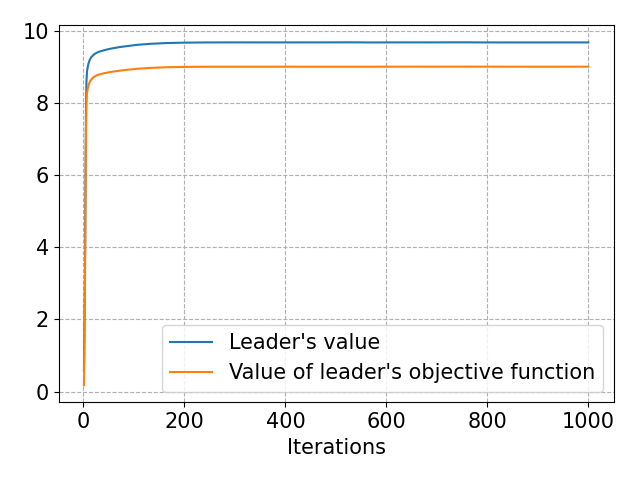}
        \label{fi:optimality_softmax}
    }

    \caption{Trends in convergence, feasibility, and leader's objective across replicated experiments with initialization (2) under direct and softmax parameterizations.}
    \label{fi:direct_softmax}
\end{figure}




The leader’s reward \(R_l(\mathbf{s},\mathbf{a})\) depends on the joint state \(\mathbf{s} = (s_1, s_2)\). Let \(\mathcal{F} = \{(1,4),(4,5)\}\) denotes the set of less desired goals for the followers. The leader receives a reward of \(10\) when both \(s_1\) and \(s_2\) lie in \(\mathcal{F}\). If one follower reaches a state in set $\mathcal{F}$ while the other is at the ``Sink'' state, the leader receives a reward of \(8\). When one follower reaches a state in set $\mathcal{F}$ and the other is not at the ``Sink'' state, the reward is \(2\). In all remaining cases, the leader obtains a reward of $0$. Based on this reward structure, if we set the discounted factor in leader's value to be $1$, the leader will receive $10$ if both robots end up in less desired goal states, $8$ if a robot reaches a less desired goal state after another robot reaches the goal state, and $2$ if before. With this leader's reward function, the leader wants to have two robots visit the less desired goals instead of the flag goal state together. To achieve this outcome, the leader is to incentivize the robots with side payments at some costs. The side payment to robot 1 is designed to be placed at $(4,5)$, and side payment to robot 2 is designed to be placed at $(1,4)$.
 
   We choose the cost function in leader's objective function to be $C(x) =\sum_{i,s,a} x_{i,s,a}$, and $w=0.1$ in \eqref{eq:opt-original} for balancing maximizing leader's reward and minimizing the cost of incentive.  Then, we apply Algorithm~\ref{alg} to compute a stationary point to \eqref{eq:opt-original} and determine the leader's incentive design $x$.
   The experiments validate the convergence of the side payments vector, and the performance improvement of the leader.

\begin{figure*}[t]
\centering
\scriptsize
\captionsetup{type=table}
\caption{Summary of experiment results when  under direct parameterization.}
\label{tab:my_label}
\begin{threeparttable}
\begin{tabular}{|l|l|l|l|l|l|}
\hline
Initialization & $x_1$ & $x_2$ & Feasibility & Leader's value & Leader's objective \\
\hline
1 & 1.6217 & 1.8154 & -1.4121e-06 & 0.4145 & 0.0708 \\
\hline
2 & 3.2322 & 4.3604 & -2.6256e-04 & 9.6954 & 8.9361 \\
\hline
3 & 3.2299 & 0.5067 & -1.3950e-04 & 2.1214 & 1.7477 \\
\hline
\end{tabular}
\begin{tablenotes}
\item[a] Feasibility is the value of constraint function in the incentive design problem.
\item[b] Leader's value is the value of leader's value function (without the cost of side payments).
\item[b] Leader's objective is the value of leader's objective function (with the cost of side payments).
\end{tablenotes}
\end{threeparttable}
\label{table}
\end{figure*}



  Table~\ref{table} summarizes the experiments results under direct parameterization starting with  three different initializations (all values are reported after convergence): (1) The initial policy are set to the followers' best response policies given initial side payments. (2) The initial policies are designed to maximize the leader's value; and (3) The initial policy for follower 2 is designed to have a low probability of reaching the less desired goal states, while the initial policy for follower 1 is configured to ensure a high probability of reaching the less desired goal states. All the experiments begin with initial side payments of 2 for both robots. Figure~\ref{fi:conv-plots} shows the convergence trends of the side payments for the three different initializations. Because the problem has multiple stationary points, the algorithm may converge to different points depending on its initialization. Particularly, we present more detailed results for experiment (2) under direct and softmax parameterization in Figure~\ref{fi:direct_softmax}. Figure \ref{fi:convergence_direct}, \ref{fi:convergence_softmax} show the trends of convergence, Figure \ref{fi:feasibility_direct}, \ref{fi:feasibility_softmax} validates that the solutions are feasible and satisfy the constraint, and Figure \ref{fi:optimality_direct}, \ref{fi:optimality_softmax} evaluate the leader's value and the value objective function for both policy parameterizations, showing a significant performance improvement of leader. All the results show that the algorithm converges to a stationary point, and the final policies satisfy the constraints while increasing the value of the leader's objective function. 

Figure~\ref{fi:occu} presents the occupancy-measure heatmaps for each robot’s optimal policy with side payments at both the first and final iterations, for the experiment under initialization 2 in Table~\ref{table}. The results show that our algorithm significantly shifts the followers’ behaviors toward the leader’s preference through the use of side payments.

 \begin{figure}[h]
    \centering
    \subfigure[Robot 1: initial optimal policy.]{
        \includegraphics[width=0.216\columnwidth]{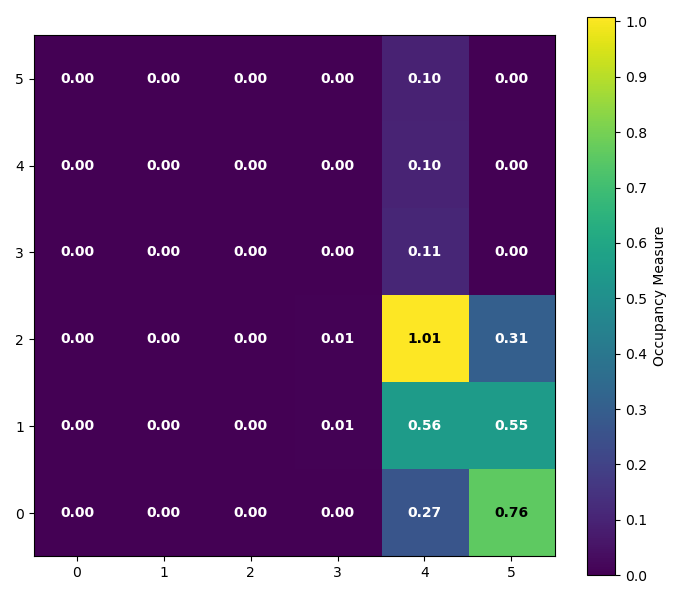}
    }%
    \hspace{0.02\columnwidth}
    \subfigure[Robot 1: final optimal policy.]{
        \includegraphics[width=0.216\columnwidth]{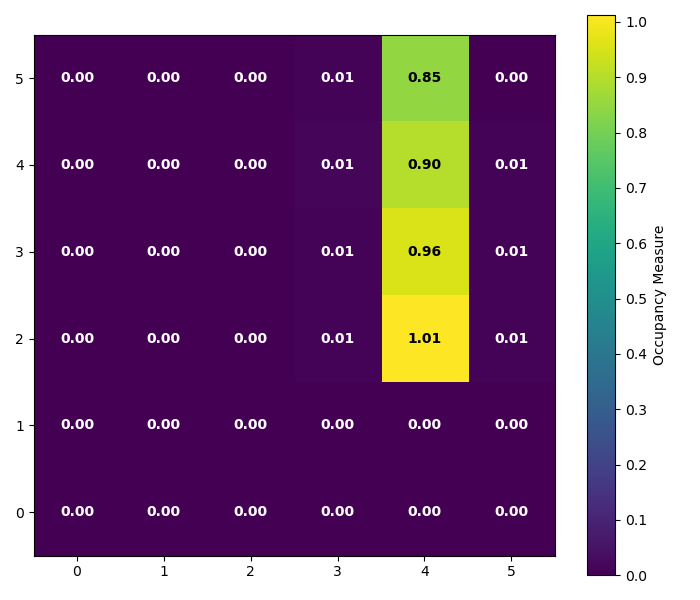}
    }%
    \hspace{0.02\columnwidth}
    \subfigure[Robot 2: initial optimal policy.]{
        \includegraphics[width=0.216\columnwidth]{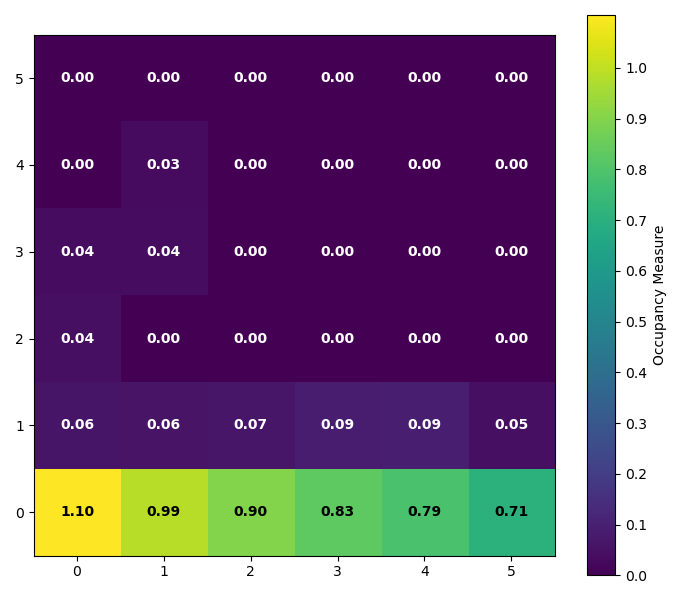}
    }%
    \hspace{0.02\columnwidth}
    \subfigure[Robot 2: final optimal policy.]{
        \includegraphics[width=0.216\columnwidth]{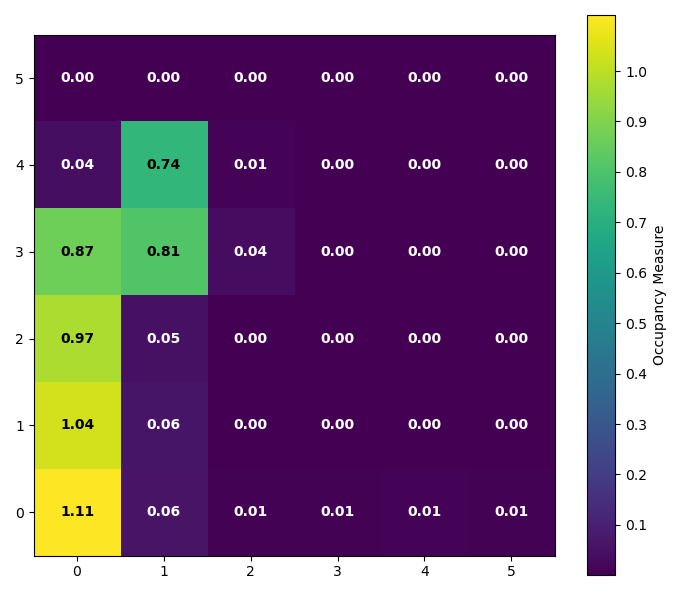}
    }
    \caption{Comparison of occupancy measures between the first iteration and last iteration for two robots.}
    \label{fi:occu}
\end{figure}

\section{Conclusion}
This paper studies incentive design   for multiple independent followers whose aggregated behaviors determine the leader's value. 
We allow for flexibility in the agents' policies in the lower level problem, assuming that each agent may have multiple optimal policies. We first formulate the problem as 
  a bilevel optimization problem to find the Strong Stackelberg Equilibrium. 
Solving this bilevel optimization problem presents challenges due to non-convexity in both the objective function and constraint functions. To address this, we transform the original bilevel problem into a constraint optimization problem and propose an algorithm inspired by existing bi-level optimization solutions. Our main contribution is to prove that the algorithm converges to a stationary point of the original incentive design problem, leveraging key properties of policy gradients and value functions in single-agent \ac{mdp}s. The experimental results demonstrate that the algorithm converges, the constraints are satisfied, and the leader’s performance improves, thereby validating our theoretical convergence proof under both direct and softmax parameterizations.

Future directions include solving the Weak Stackelberg Equilibrium for this class of leader–multi-follower games, where followers may not act cooperatively with the leader, leading to a min–max formulation for the leader's problem. Another direction is to consider adaptive incentive design where the leader has partial or incomplete information about the  reward functions or transition functions in the followers' Markov decision processes.

\bibliography{main}
\bibliographystyle{tmlr}

\appendix

 \section{Proof of Lemma \ref{le:value-gradient-x}}
\label{app:le_value-gradient-x}
\begin{proof}
    The value function can be expressed using the state-action visitation distribution and reward \citep{altman2021constrained}:
    \begin{equation*}
        V(\mu,\pi_\theta;x) =\sum_{s\in S, a \in A}  d^\pi_\mu(s,a) R(s,a;x).
        \label{eq:value measure}
    \end{equation*}

 As we recall that $R(s,a;x)=\bar R (s,a) + x(s,a)$, we have   
\begin{equation*}
        \frac{\partial V(\mu,\pi_\theta;x)}{\partial x_{\tilde s,\tilde a}} = \frac{\partial}{\partial x_{\tilde s,\tilde a}} \left( \sum_{(s,a)\in S\times A}  d^\pi_\mu(s,a) R(s,a;x) \right)
       =  d^\pi_\mu( \tilde s,\tilde a ) \nabla_{x(\tilde s,\tilde a)}R( \tilde s,\tilde a ;x)  = d^\pi_\mu( \tilde s,\tilde a )  .  
    \end{equation*}
\end{proof}

\section{
State visitation distributions, and the gradient and smoothness property of value function in a single-agent MDP}
\label{app:proposition}

To lay the groundwork for the proofs, we introduce one definitation, and two propositions previously established in \citep{agarwal2021theory}. We denote the value function at the initial state $s \in S$ under policy $\pi$ as $V(s, \pi)$.

\begin{definition}
    The state visitation distribution $d^\pi_\mu(s)$ of a policy $\pi$ is defined as:
    \[
    d^\pi_\mu(s) = (1-\gamma)\sum_{k=0}^\infty \gamma^k {\Pr}^\pi(S^{(k)}=s|S^{(0)}\sim {\mu}),
    \] where ${\Pr}^\pi(S^{(k)}=s|S^{(0)} \sim {\mu})$ is the  probability  of visiting state $s$ at the $k$-th time step when the agent follows policy $\pi$ with an initial state distribution $\mu$.
\end{definition}

\begin{definition}
    The advantage function given policy $\pi$ $A:S\times A \rightarrow \reals$ is defined as 
    \[
A(s, a,\pi) = Q(s, a,\pi) - V(s,\pi).
\]
\end{definition}

The value function, Q-value function, and advantage function, given reward $R(x)$ with side-payment $x$ and a parameterized policy $\pi_\theta$, is denoted by $V(s,\pi_\theta ;x)$, $Q(s,a,{\pi_\theta};x)$ and $A(s,a,{\pi_\theta};x)$ respectively. 

\begin{proposition}\label{pr:gradient of direct para}
    
    For direct parameterization, the partial derivative of a value function w.r.t. $\theta_{s,a}$ has an explicit form:
    \begin{equation}
        \frac{\partial V(\mu,\pi_\theta;x)}{\partial \theta_{s,a}} = \frac{1}{\tau(1-\gamma)} d^{\pi_\theta}_\mu(s) Q(s,a,{\pi_\theta};x).
    \end{equation}

    For softmax parameterization, we have
    \begin{equation}
        \frac{\partial V(\mu,\pi_\theta;x)}{\partial \theta_{s,a}} = \frac{1}{1-\gamma} d^{\pi_\theta}_\mu(s) \pi(a|s) A(s,a,{\pi_\theta};x),
    \end{equation}
    and 
    \begin{equation}
        \frac{\partial V(\mu,\pi_\theta;x)}{\partial\theta_{s,a}} = \frac{\pi_\theta(a|s)}{\tau}  \frac{\partial V(\mu,\pi_\theta;x)}{\partial \pi_{\theta}(a|s)}.
    \end{equation}

\end{proposition}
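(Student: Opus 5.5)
The plan is to derive all three identities from the policy gradient theorem in its state-action form. For any differentiable parameterization and a fixed reward $R(\cdot;x)$, it reads
\[
\nabla_\theta V(\mu,\pi_\theta;x)=\frac{1}{1-\gamma}\sum_{s'}d^{\pi_\theta}_\mu(s')\sum_{a'}\nabla_\theta \pi_\theta(a'|s')\,Q(s',a',\pi_\theta;x).
\]
I would justify this by unrolling the Bellman equation $V(s,\pi_\theta;x)=\sum_{a}\pi_\theta(a|s)Q(s,a,\pi_\theta;x)$ and differentiating. The reward $R(s,a;x)=\bar R(s,a)+x(s,a)$ does not depend on $\theta$, so only the policy factors are differentiated. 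The resulting geometric series in $\gamma$ is summed using the definition of $d^{\pi}_\mu(s)$ from Appendix~\ref{app:proposition}. Boundedness of rewards and side payments (Assumption~\ref{as:boundness}) makes the series absolutely convergent for $\gamma<1$, which permits interchanging differentiation and summation.

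\textbf{Direct parameterization.} Here $\partial \pi_\theta(a'|s')/\partial\theta_{s,a}=\1\{s'=s,a'=a\}$, so the double sum collapses to $\frac{1}{1-\gamma}d^{\pi_\theta}_\mu(s)Q(s,a,\pi_\theta;x)$. This derivation produces no temperature factor, since $\tau$ plays no role in the direct class. I would therefore state explicitly that the $1/\tau$ in the first display must be read as $1$ (presumably a typo), and prove the identity in that form.

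\textbf{Softmax parameterization.} I would first compute $\partial\pi_\theta(a'|s)/\partial\theta_{s,a}=\frac{1}{\tau}\pi_\theta(a'|s)\bigl(\1\{a'=a\}-\pi_\theta(a|s)\bigr)$, consistent with \eqref{eq:log-pi-gradient-softmax}. This derivative vanishes for $s'\neq s$. Substituting it gives
\[
\frac{1}{\tau(1-\gamma)}d^{\pi_\theta}_\mu(s)\,\pi_\theta(a|s)\Bigl(Q(s,a,\pi_\theta;x)-\sum_{a'}\pi_\theta(a'|s)Q(s,a',\pi_\theta;x)\Bigr).
\]
The bracket equals $A(s,a,\pi_\theta;x)$ because $\sum_{a'}\pi(a'|s)Q(s,a')=V(s)$. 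This recovers the second display, with the stated constant corresponding to $\tau=1$; I would flag the $1/\tau$ factor explicitly.

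\textbf{Third identity.} I would apply the chain rule through the map $\theta_s\mapsto\pi_\theta(\cdot|s)$:
\[
\frac{\partial V}{\partial\theta_{s,a}}=\sum_{a'}\frac{\partial V}{\partial\pi(a'|s)}\frac{\partial\pi(a'|s)}{\partial\theta_{s,a}}
=\frac{\pi(a|s)}{\tau}\Bigl(\frac{\partial V}{\partial\pi(a|s)}-\sum_{a'}\pi(a'|s)\frac{\partial V}{\partial\pi(a'|s)}\Bigr).
\]
Using the direct-parameterization result, $\partial V/\partial\pi(a|s)=\frac{1}{1-\gamma}d(s)Q(s,a)$, so the subtracted term is $\frac{1}{1-\gamma}d(s)V(s)$.

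This is the main obstacle. The identity as literally written holds only if $\partial V/\partial\pi(a|s)$ denotes the derivative restricted to the simplex tangent space, i.e.\ the centered quantity $\frac{1}{1-\gamma}d(s)A(s,a)$. It fails for the unconstrained partial derivative, which is $\frac{1}{1-\gamma}d(s)Q(s,a)$. I would therefore adopt the centered (tangent-space) interpretation, note that the unconstrained reading fails whenever $V(s)\neq0$, and conclude the identity by combining the previous two steps.
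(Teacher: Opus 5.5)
Your derivation is correct, and your three flags on the statement are right. The paper gives no proof of its own. It imports these formulas from \citep{agarwal2021theory}, where the softmax class has no temperature (effectively $\tau=1$) and the gradients are obtained from the policy gradient theorem, just as in your plan. So your route is the standard one made self-contained. What it adds over the bare citation is that it tracks $\tau$, and that is exactly what exposes the transcription errors.

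Your corrections match how the paper itself uses the proposition later. In Appendix~\ref{app:prepare_2} the gradient bound is $Q_{\max}/(1-\gamma)$ for direct and $Q_{\max}/(\tau(1-\gamma))$ for softmax. In Appendix~\ref{app:co_PL} the constant $\beta$ carries $1/\tau$ only in the softmax case. So the $1/\tau$ belongs in the second display, not the first.

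Your diagnosis of the third display is also correct. With the ambient derivative $\partial V/\partial\pi(a|s)=\frac{1}{1-\gamma}d^{\pi_\theta}_\mu(s)Q(s,a,\pi_\theta;x)$, the right-hand side differs from the true gradient by $\frac{1}{\tau(1-\gamma)}d^{\pi_\theta}_\mu(s)\pi_\theta(a|s)V(s,\pi_\theta;x)$.

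One sharpening of your fix. Restricting to the tangent space of the simplex determines $\partial V/\partial\pi(\cdot|s)$ only up to adding a multiple of the all-ones vector. The Euclidean projection, for instance, centers with uniform weights, and then the identity generally fails. The identity holds exactly for the $\pi_\theta$-weighted centered representative, i.e. the one with $\sum_{a'}\pi_\theta(a'|s)\,\partial V/\partial\pi(a'|s)=0$, which is $\frac{1}{1-\gamma}d^{\pi_\theta}_\mu(s)A(s,a,\pi_\theta;x)$. State the convention that way and your last step is airtight.
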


\begin{proposition}
\label{pr:smoothness for direct para}
The value function is L-smooth w.r.t. $\theta$. Under direct parameterization, for any $(\theta,\theta')$, and all starting state $s$,
\[
    \norm{\nabla_\theta V(s,\pi_\theta;x) - \nabla_{\theta}V(s,\pi_{\theta'};x)} \leq \frac{2\gamma|A|}{(1-\gamma)^3} \norm{\theta - \theta'}.
\]

Under softmax parameterization, for any $(\theta,\theta')$, and all starting state $s$,

\[
    \norm{\nabla_\theta V(s,\pi_\theta;x) - \nabla_{\theta}V(s,\pi_{\theta'};x)} \leq \frac{8}{(1-\gamma)^3} \norm{\theta - \theta'}.
\]

\end{proposition}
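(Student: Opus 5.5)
The plan is to reduce L-smoothness to a bound on the second directional derivative along an arbitrary line, following the argument of \citep{agarwal2021theory}. Since the reward $R(s,a;x)=\bar R(s,a)+x(s,a)$ does not depend on $\theta$, I treat $x$ as fixed throughout. By Assumption~\ref{as:boundness} the reward is bounded, so after rescaling by $R_{\max}$ I may take $R(\cdot,\cdot;x)\in[0,1]$. The constants as stated are for this normalized reward; for general bounded rewards they are multiplied by $R_{\max}$. Fix $\theta,\theta'$, set $u=(\theta'-\theta)/\norm{\theta'-\theta}$ and $\theta_\alpha=\theta+\alpha u$. For direct parameterization I take $\theta'$ feasible, so the whole segment stays in the product of simplices by convexity. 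It then suffices to show
\[
\Bigl|\tfrac{d^2}{d\alpha^2}V(s,\pi_{\theta_\alpha};x)\big|_{\alpha=0}\Bigr|\le L
\]
uniformly over $\theta$, $u$ and $s$. Because the Hessian is symmetric, this bounds its operator norm by $L$, and integrating the gradient along the segment then gives the Lipschitz bound on $\nabla_\theta V$.

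For the second derivative I use the matrix form of the value. Define $\tilde P(\alpha)_{(s,a),(s',a')}=P(s'|s,a)\pi_{\theta_\alpha}(a'|s')$ and $M(\alpha)=(I-\gamma\tilde P(\alpha))^{-1}$. Then
\[
V(s,\pi_{\theta_\alpha};x)=\sum_a \pi_{\theta_\alpha}(a|s)\,e_{(s,a)}^\top M(\alpha) r_x .
\]
Differentiating twice gives the derivative formulas $M'=\gamma M\tilde P'M$ and $M''=2\gamma^2M\tilde P'M\tilde P'M+\gamma M\tilde P''M$. Every term is a product of factors of four kinds: $M$, with $\norm{M}_{\infty}\le 1/(1-\gamma)$ because $M$ has nonnegative rows summing to $1/(1-\gamma)$; $\tilde P'$ and $\tilde P''$; the derivatives of the outer $\pi_{\theta_\alpha}(\cdot|s)$; and $r_x$, with $\norm{r_x}_\infty\le 1$. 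So everything reduces to bounding, for every $s$, the quantities $\sum_a|\tfrac{d}{d\alpha}\pi_{\theta_\alpha}(a|s)|$ and $\sum_a|\tfrac{d^2}{d\alpha^2}\pi_{\theta_\alpha}(a|s)|$.

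Direct parameterization: $\pi$ is linear in $\theta$, so the second derivative of $\pi$ vanishes. The first derivative gives $\sum_a|u_{s,a}|\le\sqrt{|A|}\,\norm{u_s}\le\sqrt{|A|}$. Collecting terms should give a bound of order $2\gamma|A|/(1-\gamma)^3$: the $\sqrt{|A|}$ enters twice in the cross term $M\tilde P'M\tilde P'M$, which carries three factors of $M$. One must track the outer $\pi(\cdot|s)$ factor and its derivative consistently with the inner ones; this bookkeeping is what yields the exact constant.

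Softmax parameterization: with temperature $\tau$, the derivative is $\tfrac{d}{d\alpha}\pi(a|s)=\tfrac1\tau\pi(a|s)\bigl(u_{s,a}-\sum_{a'}\pi(a'|s)u_{s,a'}\bigr)$. Its $\ell_1$ norm over $a$ is at most $2\max_a|u_{s,a}|/\tau\le 2/\tau$, and a similar computation bounds the second derivative by $6/\tau^2$. Here no $|A|$ factor appears, since the sum is an expectation under $\pi$. Plugging these into the expansion gives a bound of order $(1/(1-\gamma)^3)\cdot(\text{const}/\tau^2)$, which matches $8/(1-\gamma)^3$ when $\tau=1$; in general the constant scales as $1/\tau^2$. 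The main obstacle is this step: getting the sharp second-derivative bound on the softmax policy along a direction, and then combining the $\gamma$ and $\gamma^2$ terms so that they fit under a single $(1-\gamma)^{-3}$ without spurious factors.
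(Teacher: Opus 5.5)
Your proposal is correct and follows the same route the paper relies on: the paper imports this result from \citep{agarwal2021theory}, whose proof bounds the second directional derivative through $M(\alpha)=(I-\gamma\tilde P(\alpha))^{-1}$ exactly as you do (the paper restates this step as Proposition~\ref{pr:smoothness of policy gradient}), taking $C_1=\sqrt{|A|},\,C_2=0$ for direct and $C_1=2,\,C_2=6$ for softmax, and the bookkeeping you defer does collapse to $\frac{C_2}{(1-\gamma)^2}+\frac{2\gamma C_1^2}{(1-\gamma)^3}$. Your caveats are also accurate: the stated constants presuppose $\norm{r_x}_\infty\le 1$ and $\tau=1$, and in general they scale linearly with the reward magnitude and as $1/\tau^2$ with the temperature.
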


After a straightforward derivation, we extend the above statement: for any initial distribution $\mu$, $\nabla_\theta V(\mu,\pi_\theta;x)$ is L-smooth under both direct and softmax parameterization.

\section{Proof of Corollary \ref{co:smoothness}}
\label{app: corollary}

To simplify the proof, we first state the following lemma, whose proof is straightforward and therefore omitted.
\begin{lemma}
\label{le:sum-l-lip}
    Assume a vector valued function with vector input $\varphi(p)=[\varphi_1(p), \ldots,\varphi_n(p)]^\intercal$. If $\varphi_1,\ldots,\varphi_2$ are all Lipschitz continuous w.r.t. the input variable $p$, then $\varphi$ is Lipschitz continuous w.r.t. the input variable $p$.

    If a vector valued function with vector inputs $\varphi(p_1,\ldots,p_n)$ is Lipschitz continuous w.r.t. each $p_i$, $i\in \{1,\ldots,n\}$, then $\varphi(p_1,\ldots,p_n)$ is Lipschitz continuous w.r.t. $(p_1,\ldots,p_n)$.

    The summation of Lipschitz continuous function is also Lipschitz continuous.
\end{lemma}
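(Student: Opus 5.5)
The statement to prove is Lemma \ref{le:sum-l-lip}, which has three parts. I will prove them in the order stated, working directly from the definition of Lipschitz continuity with the Euclidean norm. Throughout I write $L_i$ for the Lipschitz constant of the $i$-th piece.

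\emph{First part (stacking components).} Suppose each component satisfies $\abs{\varphi_i(p)-\varphi_i(p')}\le L_i\norm{p-p'}$. Since $\norm{\cdot}$ is the $L^2$ norm,
\[
\norm{\varphi(p)-\varphi(p')}^2=\sum_{i=1}^n \abs{\varphi_i(p)-\varphi_i(p')}^2\le \Big(\sum_{i=1}^n L_i^2\Big)\norm{p-p'}^2 .
\]
So $\varphi$ is Lipschitz with constant $\big(\sum_i L_i^2\big)^{1/2}$. The same computation works when each $\varphi_i$ is itself vector valued.

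\emph{Second part (separate to joint).} Here I assume each Lipschitz condition holds with a constant $L_i$ that is uniform in the other arguments. Given $p=(p_1,\ldots,p_n)$ and $p'=(p'_1,\ldots,p'_n)$, I change one block at a time. Define the hybrid points $r^{(j)}=(p'_1,\ldots,p'_j,p_{j+1},\ldots,p_n)$, so that $r^{(0)}=p$ and $r^{(n)}=p'$. Consecutive hybrids differ only in block $j$, so the triangle inequality and the per-argument bounds give
\[
\norm{\varphi(p)-\varphi(p')}\le\sum_{j=1}^n \norm{\varphi(r^{(j-1)})-\varphi(r^{(j)})}\le \sum_{j=1}^n L_j\norm{p_j-p'_j}.
\]
Cauchy--Schwarz then bounds the right-hand side by $\big(\sum_j L_j^2\big)^{1/2}\norm{p-p'}$, since $\sum_j\norm{p_j-p'_j}^2=\norm{p-p'}^2$.

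\emph{Third part (sums).} If $\varphi=\sum_k\psi_k$ with each $\psi_k$ being $L_k$-Lipschitz, the triangle inequality gives the constant $\sum_k L_k$.

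\emph{Main obstacle.} The computations themselves are routine. The one point needing care is that the second part is false if the per-argument constants are allowed to depend on the frozen arguments. I will therefore state uniformity explicitly as the hypothesis of that part.

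This uniformity is also exactly what the application in Corollary \ref{co:smoothness} needs. The smoothness constants in Proposition \ref{pr:smoothness for direct para} do not depend on $x$. The dependence of the gradient on $x$ enters through the visitation measures $d^{\pi}_\mu$, which do not involve $x$, and through $Q$, which is affine in $x$ with coefficients bounded via Assumption \ref{as:boundness}. So the uniform-constant version of the second part applies.
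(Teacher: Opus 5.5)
Your proof is correct. The paper omits this proof as ``straightforward,'' and your three arguments are the standard ones it presumably intends: summing squared component errors, telescoping through the hybrid points $r^{(j)}$ followed by Cauchy--Schwarz, and the triangle inequality for sums.

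Your caveat on the second part is genuine and should be kept. Take $\varphi(p_1,p_2)=p_1p_2/(p_1^2+p_2^2)$ with $\varphi(0,0)=0$ on $[-1,1]^2$. For each frozen $p_2\neq 0$ it is Lipschitz in $p_1$ with constant $1/\abs{p_2}$, it is identically zero when $p_2=0$, and the same holds with the roles of $p_1$ and $p_2$ swapped. Yet $\varphi(t,t)=1/2$ for all $t\neq 0$, so $\varphi$ is not even continuous at the origin. The lemma as printed therefore needs your uniform-constant hypothesis. It also needs a product domain, so that the hybrid points stay in the domain; this holds for $\mathcal{X}\times\Theta$.

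One refinement to your closing remark. The constants in Proposition \ref{pr:smoothness for direct para} are imported from \citep{agarwal2021theory}, where rewards lie in $[0,1]$. For the reward $\bar R + x$ they actually scale with $\norm{\bar R + x}_\infty$. The uniformity in $x$ that you need therefore comes from the bound on $\norm{x}$ in Assumption \ref{as:boundness}, not from those constants being free of $x$.
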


 By Lemma \ref{le:sum-l-lip}, we can simplify the proof by showing that $\nabla_\theta f$, $\nabla_x f$, $\nabla_\theta g$, and $\nabla_x g$ are Lipschitz continuous w.r.t. both $\theta$ and $x$. 

To proceed, we state the following two lemmas as preparation, and the proofs are provided in Appendix~\ref{app:prepare_1} and~\ref{app:prepare_2} respectively.

\begin{lemma}
\label{le:smoothness of leader's value}
    The gradient of leader's value function $\nabla_{\theta}V(\mu,\pi_\theta)$ is Lipschitz continuous w.r.t. $\theta$.
\end{lemma}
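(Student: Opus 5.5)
The plan is to view the leader's value function as the value function of a single-agent MDP on the joint state–action space, and then reduce to Proposition~\ref{pr:smoothness for direct para}. By Assumption~\ref{as:indepence} and the product structure of $\mathbf{P}$, $\bm{\mu}$ and $\bm{\pi}_\theta$, the tuple $M=(\mathbf{S},\mathbf{A},\mathbf{P},\bm{\mu},\gamma,R_l)$ is an ordinary finite MDP with finite state space $S^n$ and action space $A^n$, and the reward $R_l$ is bounded by Assumption~\ref{as:boundness}. The only difference from the single-agent setting is that the joint policy is not an arbitrary policy on $\mathbf{S}\times\mathbf{A}$. It is the product policy $\bm{\pi}_\theta(\mathbf{a}\mid\mathbf{s})=\prod_i \pi_{\theta_i}(a_i\mid s_i)$, parameterized by $\theta=[\theta_1,\ldots,\theta_n]$. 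I would therefore not apply Proposition~\ref{pr:smoothness for direct para} directly in $\theta$. Instead I would apply it in the joint-policy coordinates $\Pi\in\reals^{|\mathbf{S}||\mathbf{A}|}$ and compose with the map $\theta\mapsto\bm{\pi}_\theta$.

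The argument then proceeds in three steps.

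First, let $\Pi(\mathbf{a}\mid\mathbf{s})$ denote a direct parameterization of joint policies, with $\Pi_{\mathbf{s}}\in\dist{\mathbf{A}}$. By Proposition~\ref{pr:smoothness for direct para} applied to $M$ (for each starting state, then averaging over $\bm{\mu}$ as noted after the proposition), $\Pi\mapsto V_l(\bm{\mu},\Pi)$ is $L_0$-smooth with $L_0=2\gamma|A|^n/(1-\gamma)^3$. Its gradient $\frac{1}{1-\gamma}d^\Pi_{\bm\mu}(\mathbf{s})Q_l(\mathbf{s},\mathbf{a},\Pi)$ is also uniformly bounded, by $G_0=\frac{\max|R_l|}{(1-\gamma)^2}$.

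Second, I would show that the map $\Phi:\theta\mapsto\bm{\pi}_\theta$ is bounded, Lipschitz and has a Lipschitz Jacobian on $\Theta$.
\begin{itemize}
\item \emph{Direct parameterization.} Each entry $\prod_i\theta_{i,s_i,a_i}$ is a multilinear polynomial in variables lying in $[0,1]$. Its first and second partial derivatives are products of at most $n-1$ (respectively $n-2$) such variables, hence bounded by $1$. So $\Phi$ and $D\Phi$ are Lipschitz on the compact convex set $\Theta$, with constants depending only on $n,|S|,|A|$.
\item \emph{Softmax parameterization.} Each factor $\pi_{\theta_i}(a_i\mid s_i)$ has derivatives given by \eqref{eq:log-pi-gradient-softmax}, bounded by $1/\tau$, and second derivatives bounded by $O(1/\tau^2)$. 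Products of such bounded factors with bounded derivatives again have bounded first and second derivatives on all of $\reals^{d}$, which yields global Lipschitz constants.
\end{itemize}

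Third, I would use the chain rule $\nabla_\theta V_l(\bm\mu,\bm\pi_\theta)=D\Phi(\theta)^\top\nabla_\Pi V_l(\bm\mu,\Phi(\theta))$ and the standard product estimate:
\begin{multline*}
\norm{\nabla_\theta V_l(\theta)-\nabla_\theta V_l(\theta')}\le \norm{D\Phi(\theta)-D\Phi(\theta')}\,G_0\\
+\norm{D\Phi(\theta')}\,L_0\norm{\Phi(\theta)-\Phi(\theta')}.
\end{multline*}
Every term on the right is bounded by a constant times $\norm{\theta-\theta'}$, which gives the claim.

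The main obstacle is the second step for softmax, specifically verifying that the second derivatives of the product of softmax factors are globally bounded. I would handle it by writing $\partial_{\theta_{i,s,a}}\pi_{\theta_i}(b\mid s)=\frac1\tau\pi_{\theta_i}(b\mid s)(\1[b=a]-\pi_{\theta_i}(a\mid s))$ and differentiating once more. The result is a polynomial in probabilities, each in $[0,1]$, with coefficient $\tau^{-2}$.

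A secondary subtlety for the direct case is that Proposition~\ref{pr:smoothness for direct para} is stated on the simplex product. Since both $\Phi(\Theta)$ and the domain of the proposition lie in the joint simplex product, and the estimates are only needed for $\theta,\theta'\in\Theta$, this causes no difficulty.
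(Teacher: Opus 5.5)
Your proof is correct, but it takes a different route from the paper.

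\textbf{The paper's route.} It never introduces the joint-simplex coordinates $\Pi$. Instead it applies the general directional-smoothness bound of Proposition~\ref{pr:smoothness of policy gradient} (Lemma 53 of Agarwal et al.) directly to the product policy $\bm{\pi}_{\theta+\alpha u}$ on the joint MDP. All the work is then in bounding
$C_1=\sum_{\mathbf{a}}\lvert\tfrac{d}{d\alpha}\bm{\pi}_\alpha(\mathbf{a}\mid\mathbf{s}_0)\rvert$ and $C_2=\sum_{\mathbf{a}}\lvert\tfrac{d^2}{d\alpha^2}\bm{\pi}_\alpha(\mathbf{a}\mid\mathbf{s}_0)\rvert$
for each parameterization. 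The factors $h(\phi)=\prod_{j\notin\phi}\pi_{\theta_j}(a_j\mid s_j)$ are used so that sums over the other followers' actions collapse to $1$.

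\textbf{Your route.} You use Proposition~\ref{pr:smoothness for direct para} on the joint simplex as a black box, then add elementary calculus on $\Phi$ and the chain-rule product estimate. This is more modular, and the entrywise derivative bounds on $\Phi$ are easier to verify than the paper's $\ell_1$-over-joint-actions bookkeeping.

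\textbf{The cost: constants.} Your $L_0$ carries $|A|^n$. The crude Euclidean bounds on $\norm{D\Phi}$, on the Lipschitz constant of $\Phi$, and on $\norm{\nabla_\Pi V_l}$ all grow with the joint dimension $|S|^n|A|^n$. Note in particular that $\max|R_l|/(1-\gamma)^2$ bounds $\nabla_\Pi V_l$ only entrywise; its $2$-norm needs an extra factor $|A|^{n/2}$, so your $G_0$ should be adjusted accordingly. As a result, your Lipschitz constant is exponential in the number of followers. The paper's route, done carefully, gives a constant polynomial in $n$ and $|A|$ and independent of $|S|$, because the product structure is exploited inside the directional bound rather than after passing to the full joint policy space.

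For the qualitative statement of the lemma this is immaterial. It matters only if you want to use $L$ in the step-size condition $\xi\le 1/L$ of Theorem~\ref{strong-duality}.
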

Recall that we use $\mathbf{d}^\pi_\mu$ to denote the vector with the entries to be $d^\pi_\mu(s,a)$ for each $s \in S$ and $a \in A$, given the policy $\pi$ and initial distribution $\mu$.

\begin{lemma}
\label{le:occu-l-lip}
    $\mathbf{d}^{\pi_\theta}_\mu$ is  Lipschitz continuous w.r.t. $\theta$.
\end{lemma}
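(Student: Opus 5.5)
Since $\mathbf{d}^{\pi_\theta}_\mu$ is the unique solution of a linear system whose coefficients are polynomial (direct) or smooth (softmax) in $\theta$, the plan is to bound its Jacobian uniformly and then apply the mean value theorem.

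\textbf{Step 1: closed form.} Let $P_\pi$ be the $|S||A|\times|S||A|$ state-action transition matrix,
\[
P_\pi[(s,a),(s',a')] = P(s'|s,a)\,\pi(a'|s'),
\]
and let $\nu_\pi(s,a)=\mu(s)\pi(a|s)$. Then the visitation vector is
\[
(\mathbf{d}^{\pi}_\mu)^\intercal = (1-\gamma)\,\nu_\pi^\intercal\,(I-\gamma P_\pi)^{-1}.
\]
For $\gamma<1$ the inverse exists and equals $\sum_k \gamma^k P_\pi^k$. Since $P_\pi$ is row-stochastic, $\norm{(I-\gamma P_\pi)^{-1}}_\infty\le 1/(1-\gamma)$.

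\textbf{Step 2: differentiate.} For any coordinate $\theta_j$,
\[
\frac{\partial (\mathbf{d}^{\pi_\theta}_\mu)^\intercal}{\partial\theta_j} = (1-\gamma)\left[\frac{\partial\nu^\intercal}{\partial\theta_j}(I-\gamma P)^{-1} + \gamma\,\nu^\intercal (I-\gamma P)^{-1}\frac{\partial P}{\partial\theta_j}(I-\gamma P)^{-1}\right].
\]
This uses $\partial (I-\gamma P)^{-1} = \gamma (I-\gamma P)^{-1}(\partial P)(I-\gamma P)^{-1}$. The entries of $\partial P/\partial\theta_j$ and $\partial\nu/\partial\theta_j$ are controlled by $\partial\pi(a'|s')/\partial\theta_j$.

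\textbf{Step 3: bound the policy derivatives.}
\begin{itemize}
\item Under direct parameterization, $\partial\pi(a'|s')/\partial\theta_{s,a}=\mathbb{1}[(s',a')=(s,a)]$.
\item Under softmax parameterization, by \eqref{eq:log-pi-gradient-softmax}, $\partial\pi(a'|s')/\partial\theta_{s,a}=\mathbb{1}[s'=s]\,\pi(a'|s)\,(\mathbb{1}[a'=a]-\pi(a|s))/\tau$, whose absolute value is at most $1/\tau$.
\end{itemize}
In both cases $\norm{\partial P/\partial\theta_j}_\infty$ and $\norm{\partial\nu/\partial\theta_j}_1$ are bounded by a constant $c$ independent of $\theta$.

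\textbf{Step 4: uniform Jacobian bound and conclusion.} Using $\ell_1$–$\ell_\infty$ duality for row vectors times matrices,
\[
\norm{\partial \mathbf{d}/\partial\theta_j}_1 \le (1-\gamma)\left[\frac{c}{1-\gamma}+\frac{\gamma c}{(1-\gamma)^2}\right] = \frac{c}{1-\gamma}.
\]
So every column of the Jacobian is uniformly bounded, and the Frobenius norm of the Jacobian is at most $\sqrt{\dim\theta}\,c/(1-\gamma)$. The parameter set is convex: a product of simplices under direct parameterization, and $\reals^{|S||A|}$ under softmax. The mean value theorem on the segment between $\theta$ and $\theta'$ therefore gives
\[
\norm{\mathbf{d}^{\pi_\theta}_\mu-\mathbf{d}^{\pi_{\theta'}}_\mu}\le L_d\norm{\theta-\theta'}.
\]

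\textbf{Main obstacle.} The delicate step is Step 4, where $\theta$-independence of the bound is needed. The resolvent bound $1/(1-\gamma)$ must hold uniformly in $\pi$, which it does because $P_\pi$ is stochastic for every policy. This breaks down if $\gamma=1$, as in the examples, so I would state the lemma for $\gamma_i<1$. A secondary point is that for direct parameterization the derivative is taken on the simplex. There I would extend $\pi$ linearly to an open neighborhood, or restrict to directions in the tangent space; either way $\partial\pi/\partial\theta$ is unchanged.
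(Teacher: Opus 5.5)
Your proof is correct, but it takes a genuinely different route from the paper's.

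The paper never works with the resolvent $(I-\gamma P_\pi)^{-1}$. Instead it observes that each entry $d^{\pi_\theta}_\mu(s,a)$ is itself a value function: the value of $\pi_\theta$ under the indicator reward $(1-\gamma)\mathbf{1}[(s',a')=(s,a)]$, with side payment $x=0$. It then appeals to Lemma~\ref{le:value-l-lip}, which says any value function with bounded reward is Lipschitz in $\theta$. That lemma is proved by bounding each partial derivative through the policy-gradient expressions of Proposition~\ref{pr:gradient of direct para}. The bound is $Q_{\max}/(1-\gamma)$ for the direct parameterization and $Q_{\max}/(\tau(1-\gamma))$ for softmax. The vector statement then follows entrywise from Lemma~\ref{le:sum-l-lip}.

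The paper's route is shorter and reuses machinery it needs anyway for Corollary~\ref{co:smoothness}. Yours has several advantages in exchange:
\begin{itemize}
\item It is self-contained, needing only the linear-system representation and the uniform bound $\norm{(I-\gamma P_\pi)^{-1}}_\infty\le 1/(1-\gamma)$ rather than the policy-gradient theorem.
\item It controls the whole vector at once through the column bound $\norm{\partial\mathbf{d}/\partial\theta_j}_1\le c/(1-\gamma)$, instead of assembling $|S||A|$ entrywise constants.
\item It carries over unchanged to any parameterization with $\sum_{a'}|\partial\pi(a'|s')/\partial\theta_j|$ uniformly bounded.
\item It makes explicit two points the paper leaves implicit. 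First, $\gamma<1$ is required: the paper's bounds also degenerate as $\gamma\to 1$, even though some of its examples set $\gamma=1$. Second, derivatives on the simplex need to be justified for the direct parameterization, and your segment argument inside the convex parameter set handles this.
\end{itemize}
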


\begin{proof}The Lipschitz continuity that we can establish with the previous results is as follows:

 \begin{itemize}
     \item For $\nabla_x f$ w.r.t. $x$: This follows from Assumption \ref{as:cost}.
     \item For $\nabla_\theta f$ w.r.t. $\theta$: This follows from Lemma \ref{le:smoothness of leader's value}.
     \item For $\nabla_x g$ and $\nabla_\theta g$: By Lemma \ref{le:sum-l-lip}, it suffices to prove the Lipschitz continuity of $\nabla_x V(\mu,\pi_\theta;x)$ and $\nabla_\theta V(\mu,\pi_\theta;x)$, where:
\begin{itemize}
    \item $\nabla_x V(\mu, x,\pi_\theta)$ is Lipschitz continuous w.r.t. $\theta$ based on Lemma \ref{le:occu-l-lip}, and independent of $x$.
    \item The Lipschitz continuity of $\nabla_\theta V(\mu, x,\pi_\theta)$ w.r.t. $\theta$ is proven in Proposition \ref{pr:smoothness for direct para} in Appendix ~\ref{app:proposition}.
\end{itemize}
\end{itemize}

Therefore, we need only prove the Lipschitz continuity of $\nabla_\theta V(\mu, x,\pi_\theta)$ w.r.t. $x$. Then let's focus our analysis on $\dfrac{\partial V(\mu,\pi_\theta;x)}{\partial\theta_{s,a}}$. According to Proposition \ref{pr:gradient of direct para} in Appendix ~\ref{app:proposition}, this partial derivative is linear in $Q(s,a,x,\pi_\theta)$ w.r.t. $x$ so that we only need to examine the properties of $Q(s,a,x,\pi_\theta)$.

We can express the $Q$-function as  follows:
\begin{equation*}
\label{eq:Q and value}
Q(s,a,x,\pi_\theta) = R(s,a;x) + \sum_{s' \in S} P(s'|s,a) V(s',\pi_\theta;x).
\end{equation*}
Since $R(s,a;x)$ is Lipschitz continuous w.r.t. $x$, we only need to examine the Lipschitz continuity for $V(s',\pi_\theta;x)$ w.r.t. $x$, which is implied by Lemma \ref{le:value-gradient-x} given that the vector $\bm{d}^{\pi_\theta}_\mu$ has a bounded norm for any initial distribution $\mu$ and policy $\pi_\theta$.
\end{proof}

\section{Proof of Lemma \ref{le:smoothness of leader's value}}
\label{app:prepare_1}
To prove this lemma, we first restate Lemma 53 from \citep{agarwal2021theory} as Proposition \ref{pr:smoothness of policy gradient}.

\begin{proposition}(Smoothness of policy gradient)
\label{pr:smoothness of policy gradient}
    Consider a unit vector $u$, let $\pi_\alpha \triangleq \pi_{\theta+\alpha u}$ and let $\tilde V (\alpha)$ be the corresponding value at a fixed state $s_0$, i.e.
\[
    \tilde V (\alpha) \triangleq V^{\pi_\alpha}(s_0).
\]
    Assume that 
    \[
    \sum_{a \in A} \left | \frac{d \pi_{\alpha}(a|s_0)}{d\alpha}|_{\alpha=0} \right | \leq C_1, \quad \sum_{a \in A} \left | \frac{d^2 \pi_{\alpha}(a|s_0)}{(d\alpha)^2}|_{\alpha=0} \right | \leq C_2,
    \] then
    \[
        \max_{\norm{u}_2= 1} \left | \frac{d^2 \tilde V (\alpha)}{(d\alpha)^2}|_{\alpha=0} \right | \leq \frac{C_2}{(1-\gamma)^2} + \frac{2\gamma C_1^2}{(1-\gamma)^3}.
    \]
\end{proposition}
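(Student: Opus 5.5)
We write the value along the line $\alpha\mapsto\theta+\alpha u$ in matrix form and differentiate twice, as in \citep{agarwal2021theory}. We assume the reward is normalized to $[0,1]$; with bounded rewards, as in Assumption \ref{as:boundness}, the bound simply scales by $\max|R|$. We also read the hypotheses on $\pi_\alpha$ as holding at every state $s$, not only at $s_0$, since the derivatives below act on all rows.

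Define the induced state transition matrix and reward vector
\[
P(\alpha)(s,s')=\sum_{a\in A}\pi_\alpha(a|s)P(s'|s,a),\qquad r_\alpha(s)=\sum_{a\in A}\pi_\alpha(a|s)R(s,a),
\]
and set $M(\alpha)=(I-\gamma P(\alpha))^{-1}$. Then $\tilde V(\alpha)=e_{s_0}^\top M(\alpha) r_\alpha$. Using $\frac{d}{d\alpha}M=\gamma M P' M$, differentiating twice gives
\[
\tilde V''=2\gamma^2 e_{s_0}^\top M P' M P' M r+\gamma e_{s_0}^\top M P'' M r+2\gamma e_{s_0}^\top M P' M r'+e_{s_0}^\top M r'',
\]
with everything evaluated at $\alpha=0$.

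Next we collect elementary $\ell_\infty$ bounds.
\begin{itemize}
\item $M\ge 0$ entrywise and $M\mathbf 1=\mathbf 1/(1-\gamma)$, so $\norm{Mz}_\infty\le \norm{z}_\infty/(1-\gamma)$.
\item $\norm{P'z}_\infty\le \max_s\sum_a|\tfrac{d\pi_\alpha(a|s)}{d\alpha}|\sum_{s'}P(s'|s,a)|z_{s'}|\le C_1\norm{z}_\infty$, and likewise $\norm{P''z}_\infty\le C_2\norm{z}_\infty$.
\item Since $R\in[0,1]$, we have $\norm{r'}_\infty\le C_1$ and $\norm{r''}_\infty\le C_2$.
\item $\norm{Mr}_\infty\le 1/(1-\gamma)$.
\end{itemize}

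Applying these bounds term by term, the four terms of $\tilde V''$ are bounded by
\[
\frac{2\gamma^2C_1^2}{(1-\gamma)^3},\qquad \frac{\gamma C_2}{(1-\gamma)^2},\qquad \frac{2\gamma C_1^2}{(1-\gamma)^2},\qquad \frac{C_2}{1-\gamma},
\]
respectively. Adding them and combining the $C_2$ terms, $\frac{\gamma C_2+(1-\gamma)C_2}{(1-\gamma)^2}=\frac{C_2}{(1-\gamma)^2}$. Combining the $C_1^2$ terms, $\frac{2\gamma C_1^2(\gamma+1-\gamma)}{(1-\gamma)^3}=\frac{2\gamma C_1^2}{(1-\gamma)^3}$. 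This gives the claimed bound, uniformly over unit vectors $u$, so the maximum over $\norm{u}_2=1$ follows.

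The main care point is the derivative of the resolvent $M(\alpha)$ and verifying that the operator-norm bounds for $P'$ and $P''$ follow from the per-state hypotheses. Everything else is bookkeeping of the four terms.
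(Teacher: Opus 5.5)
Your proof is correct and is essentially the resolvent-differentiation argument behind Lemma 53 of \citep{agarwal2021theory}, which the paper imports without reproving. The only cosmetic difference is that you differentiate $e_{s_0}^\top (I-\gamma P(\alpha))^{-1} r_\alpha$ using the state-level kernel and the policy-averaged reward, whereas that proof writes $\tilde V(\alpha)=\sum_a \pi_\alpha(a|s_0)\,Q^{\pi_\alpha}(s_0,a)$ with $Q^{\pi_\alpha}$ given by the state-action resolvent. Both routes produce the same four terms with identical bounds. Your reading of the hypotheses as holding at every state, uniformly in $u$, with rewards in $[0,1]$, is exactly what that argument uses implicitly. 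It is also genuinely needed, since bounds at $s_0$ alone do not control $\tilde V''$.
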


Next we provide a proof for Lemma~\ref{le:smoothness of leader's value}.

 \begin{proof}
      Based on   Proposition \ref{pr:smoothness of policy gradient}, we only need to show $\sum_{\bm{a} \in \bm{A}} \left | \frac{d \bm{\pi}_{\alpha}(\bm{a}|\bm{s}_0)}{d\alpha}|_{\alpha=0} \right |$ and $ \sum_{\bm{a} \in \bm{A}}  \left | \frac{d^2 \bm{\pi}_{\alpha}(\bm{a}|\bm{s}_0)}{(d\alpha)^2}|_{\alpha=0} \right |$ are bounded. We follow the idea in the proof of Lemma 54 and 55 in \citep{agarwal2021theory}.
      
      Let $h(\phi)$ be defined as $\prod_{j   \in \mathbf{N}\setminus \phi } {\pi_{\theta_j} (a_j|s_j)}$, where $\phi$ is a subset of $\mathbf{N}$. It is clear that $0 \leq h(\phi) \leq 1$ for any $\phi\subseteq \mathbf{N}$. In the case of direct parameterization, let $n =\abs{\mathbf{N}}$ is the total number of followers, and recall that $\mathbf{a}=(a_1,\cdots,a_n)$, we have





\begin{equation*}
    \begin{aligned}
          \sum_{\bm{a} \in \bm{A}} \left | \frac{d \bm{\pi}_{\alpha}(\bm{a}|\bm{s}_0)}{d\alpha}|_{\alpha=0} \right |
          = & \sum_{\bm{a} \in \bm{A}} \left| \sum_{j \in \mathbf{N}} \frac{\partial \bm{\pi}_\theta(\bm{a}|\bm{s})}{\partial \theta_{j,s_j,a_j}} \frac{d \theta_{j,s_j,a_j}}{d \alpha} \right| \\ \overset{(i)}{\leq} & \sum_{\bm{a}\in \bm{A}} \sum_{j \in \mathbf{N}}  \left |\frac{\partial \bm{\pi}_\theta(\bm{a}|\bm{s})}{\partial \theta_{j,s_j,a_j}}  \right| \left|\frac{d \theta_{j,s_j,a_j}}{d \alpha} \right|
         \\ \overset{(ii)}{\leq} & \sum_{\bm{a}\in \bm{A}} \sum_{j \in \mathbf{N}} \left | h(\{j\}) \right | \\ \leq & n|A|.
    \end{aligned}
\end{equation*}
where $(i)$ is follows from the triangle inequality and Cauchy–Schwarz inequality, and $(ii)$ is because $\frac{d\theta}{d\alpha} = u$ is a unit vector.



Also, differentiating again w.r.t. $\alpha$ gives
\begin{equation*}
    \begin{aligned}
         \sum_{\bm{a} \in \bm{A}}  \left | \frac{d^2 \bm{\pi}_{\alpha}(\bm{a}|\bm{s}_0)}{(d\alpha)^2}|_{\alpha=0} \right |   = & \sum_{\bm{a} \in \bm{A}} \left | \sum_{j,k \in \mathbf{N}} \frac{\partial \bm{\pi}_{\theta}(\bm{a}|\bm{s})}{\partial \theta_{j,s_j,a_j}\partial\theta_{k,s_k,a_k}} \frac{d \theta_{j,s_j,a_j}}{d\alpha} \frac{d \theta_{k,s_k,a_k}}{d\alpha} \right | \\ \leq&\sum_{\bm{a} \in \bm{A}}   \sum_{j \not =k} \left | h(\{j,k\})\right | \\  \leq & n(n-1)|A|.
    \end{aligned}
\end{equation*}




Let $\theta_s \in {\reals^{|A|}}$ denote the parameters associated with a given state $s$ for a follower. In the case of softmax parameterization, from ~\eqref{eq:log-pi-gradient-softmax}, we have $\nabla_{\theta_s} \pi_{\theta}(a|s) = \pi_{\theta}(a|s)(e_a-\pi(\cdot|s))$, where $e_a$ is an  indicator vector with a $1$ at the entry corresponding to action $a$ and 
$\pi(\cdot|s)$ is a vector of probabilities.

Then, we have
\begin{equation*}
    \begin{aligned}
         \sum_{\bm{a} \in \bm{A}} \left | \frac{d \bm{\pi}_{\alpha}(\bm{a}|\bm{s}_0)}{d\alpha}|_{\alpha=0} \right |  = &
        \sum_{\bm{a} \in \bm{A}}  \left | u^\intercal \nabla_{\theta} \bm{\pi}_{\alpha}(\bm{a}|\bm{s}_0)|_{\alpha=0} \right | \\ = &
        \sum_{\bm{a} \in \bm{A}}  \left | \sum_{i\in \mathbf{N}} h(\{i\}) u_{s_{i,0}}^\intercal \nabla_{\theta_{i,\bm{s}_0}} {\pi}_{\theta_i}(a_i|s_{i,0}) \right |
        \\ \overset{(i)}{\leq}  &  \sum_{\bm{a} \in \bm{A}} \sum_{i\in \mathbf{N}} h(\{i\})\left |  \nabla_{\theta_{i,\bm{s}_0}} {\pi}_{\theta_i}(a_i|s_{i,0}) \right |
        \\ {\leq} &  \sum_{\bm{a} \in \bm{A}} \sum_{i\in \mathbf{N}} h(\{i\})\left |  \pi_{\theta_i}(a_i|s_{i,0})(e_{a_i}-\pi(\cdot|s_{i,0})) \right |
        \\ \leq & \sum_{i\in \mathbf{N}} \sum_{\bm{a} \in \bm{A}}  \bm{\pi}_{\theta}(\bm{a}|\bm{s})\left |  e_{a_i}-\pi(\cdot|s_{i,0}) \right | 
        \\ \leq &
       \sum_{i\in \mathbf{N}} \max_{\bm{a} \in \bm{A}} \left |  e_{a_i}-\pi(\cdot|s_{i,0}) \right |  \\ \leq &
        2n,
    \end{aligned}
\end{equation*} where $(i)$ is follows from the triangle inequality and Cauchy–Schwarz inequality.

Also, for the softmax parameterization, the following first derivative and second derivative are bounded.
\[
\left |\frac{\partial \pi_{\theta_j}(a_j|s_j)}{\partial \theta_{j,s_j,\tilde{a}_j} } \right| \leq \frac{1}{\tau},
\]

and


\begin{equation*}
     \left|\frac{\partial \pi_{\theta_j}(a_j|s_j)}{\partial^2 \theta_{j,s_j,\tilde{a}_j} } \right |  =  \frac{1}{\tau}|\pi(\tilde{a}_j|s_j) (\mathbf{1}_{\tilde a=a}-\pi(\tilde{a}_j|s_j))(\mathbf{1}_{\tilde a=a}-2\pi(\tilde{a}_j|s_j))|  \leq \frac{2}{\tau}.
\end{equation*}

Then, we have

\begin{equation*}
    \begin{aligned}
          \sum_{\bm{a} \in \bm{A}} \left | \frac{d^2 \bm{\pi}_{\alpha}(\bm{a}|\bm{s}_0)}{(d\alpha)^2}|_{\alpha=0} \right |   = &  \sum_{\bm{a} \in \bm{A}}\left | \sum_{j,k} \frac{\partial \bm{\pi}_{\theta}(\bm{a}|\bm{s})}{\partial \theta_{j,s_j,\tilde{a}_j}\partial\theta_{k,s_k,a_k}} \frac{d \theta_{j,s_j,\tilde{a}_j}}{d\alpha} \frac{d \theta_{k,s_k,a_k}}{d\alpha} \right | \\ \leq &  \sum_{\bm{a} \in \bm{A}}\sum_{j \not =k} \left | h(\{j,k\}) \frac{\partial \pi_{\theta_j}(a_j|s_j)}{\partial \theta_{j,s_j,\tilde{a}_j} } \frac{\partial \pi_{\theta_k}(a_k|s_k)}{\partial \theta_{k,s_k,\tilde{a}_k} } \right |+ \sum_{j=k}  \left | h(\{j\}) \frac{\partial \pi_{\theta_j}(a_j|s_j)}{\partial^2 \theta_{j,s_j,\tilde{a}_j} }  \right | \\ \leq &  \sum_{\bm{a} \in \bm{A}} \left(\frac{n|A|(n|A|-1)}{\tau^2} + \frac{2n|A|}{\tau}\right) \\ = & \frac{n|A|^2(n|A|+2\tau-1)}{\tau^2}.
    \end{aligned}
\end{equation*}


 \end{proof}
 
\section{Proof of Lemma \ref{le:occu-l-lip}}
\label{app:prepare_2}
\begin{proof}

To prove this lemma, we first present the following result and its proof:
\begin{lemma}
\label{le:value-l-lip}
    With assumption \ref{as:boundness}, the value function $V(\mu,\pi_\theta;x)$ in a single-agent \ac{mdp} is  Lipschitz continuous w.r.t. $\theta$.
\end{lemma}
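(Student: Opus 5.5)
The plan is to show that $\nabla_\theta V(\mu,\pi_\theta;x)$ is uniformly bounded on $\Theta$. Lipschitz continuity then follows from the mean value theorem. For direct parameterization $\Theta$ is a product of simplices, which is convex. For softmax $\Theta$ is all of Euclidean space, also convex. So for any $\theta,\theta'$ the segment between them stays in $\Theta$, and
\[
|V(\mu,\pi_\theta;x)-V(\mu,\pi_{\theta'};x)| \le \sup_{\tilde\theta\in\Theta}\norm{\nabla_\theta V(\mu,\pi_{\tilde\theta};x)}\,\norm{\theta-\theta'} .
\]

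To bound the gradient I will use the explicit formulas in Proposition~\ref{pr:gradient of direct para}. By Assumption~\ref{as:boundness} the modified reward $R(s,a;x)=\bar R(s,a)+x(s,a)$ is bounded in absolute value by some $R_{\max}$. Hence $|Q(s,a,\pi_\theta;x)|\le R_{\max}/(1-\gamma)$, and $|V(s,\pi_\theta;x)|$ obeys the same bound, so $|A(s,a,\pi_\theta;x)|\le 2R_{\max}/(1-\gamma)$.

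In the direct case, each partial derivative is $d^{\pi_\theta}_\mu(s)\,Q(s,a,\pi_\theta;x)/(1-\gamma)$ (this is the standard form; I will not use the paper's stray $1/\tau$ factor). Since $d^{\pi_\theta}_\mu$ is a probability distribution over states, summing squares over $(s,a)$ gives
\[
\norm{\nabla_\theta V}^2\le \frac{|A|R_{\max}^2}{(1-\gamma)^4}\sum_s d^{\pi_\theta}_\mu(s)^2\le \frac{|A|R_{\max}^2}{(1-\gamma)^4}.
\]
In the softmax case, each partial derivative is $d^{\pi_\theta}_\mu(s)\pi_\theta(a|s)A(s,a,\pi_\theta;x)/(1-\gamma)$, up to the $1/\tau$ temperature factor. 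The weights $d(s)\pi(a|s)$ sum to one, so the gradient norm is at most $2R_{\max}/(\tau(1-\gamma)^2)$. Both bounds are uniform in $\theta$, and they are also uniform in $x$ because $\norm{x}$ is bounded.

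The main point needing care is the domain, not the bound itself. For direct parameterization the formula from Proposition~\ref{pr:gradient of direct para} is a derivative with respect to $\theta$ viewed in the ambient space $\reals^{|S||A|}$. I have to make sure the mean-value argument is applied along segments inside the simplex product. This is fine because $V$ extends smoothly to a neighborhood in which $\theta_s$ is merely nonnegative, and the formula holds there. I also have to handle $\gamma=1$: the paper allows $\gamma\in[0,1]$, but these bounds need $\gamma<1$, so I will assume $\gamma<1$ for the followers, as Proposition~\ref{pr:smoothness for direct para} implicitly does.
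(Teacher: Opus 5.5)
Your proposal is correct and follows essentially the same route as the paper: use the explicit gradient formulas together with the boundedness of $Q$ (and of the advantage) to bound $\norm{\nabla_\theta V(\mu,\pi_\theta;x)}$ uniformly, then conclude Lipschitz continuity. You also make explicit several points the paper leaves implicit, namely the convexity of $\Theta$ needed for the mean-value step, a sharper norm bound using that $d^{\pi_\theta}_\mu$ sums to one, the correct form of the direct-parameterization gradient, and the requirement $\gamma<1$.
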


From Assumption \ref{as:boundness}, the Q-value function is known to be bounded, and we denote its maximum value by $Q_{\max}$, Furthermore, Proposition \ref{pr:gradient of direct para} in Appendix ~\ref{app:proposition} implies that the partial derivative of value function is bounded by $Q_{\max}/(1-\gamma)$ under direct parameterization, and $Q_{\max}/(\tau(1-\gamma))$ under softmax parameterization, where $Q_{\max}$ is the upper bound of Q-value function. This implies $\norm{\nabla_\theta V(\mu,\pi_\theta;x)}$ is also bounded. Therefore, the value function $V(\mu,\pi_\theta;x)$ is Lipschitz continuous w.r.t. $\theta$.

We now proceed to prove the Lemma \ref{le:occu-l-lip}. We have
 \begin{equation*}
     \begin{aligned}
          d^{\pi_\theta}_\mu(s,a)  = & (1-\gamma)\sum_{k=0}^\infty \gamma^k {\Pr}^\pi(S^{(k)}=s, A^{(k)}=a|S^{(0)}\sim \mu ) \\ = &
         \mathbb{E}_{\pi} \left[\sum\limits_{k = 0}^{\infty}\gamma ^k (1-\gamma) {\mathbf{1}}{(S^{(k)}=s,A^{(k)}=a)}\mid S^{(0)} \sim \mu \right],
     \end{aligned}
 \end{equation*}
 where ${\mathbf{1}}(\cdot)$ is the indicator function.
 Thus, 
 for any $s \in S$ and $a \in A$, $d^\pi_\mu(s,a)$ can be viewed as a value function of policy $\pi$ evaluated with the following reward function 
 \begin{equation*}
 r(s',a') =\begin{cases} 
 (1-\gamma),    & \text{ if } (s',a')=(s,a)\\
 0, & \text{ otherwise}  
 \end{cases}
 \end{equation*}

 From Lemma~\ref{le:value-l-lip} (we choose $||x||=\bm{0}$), we derive that $d^{\pi_\theta}_\mu(s,a)$ is  Lipschitz continuous w.r.t. $\theta$. 
\end{proof}

\section{Proof of Corollary \ref{co:PL-g}}
\label{app:co_PL}
Since $\nabla_\theta g(x,\theta)=[\nabla_{\theta_1}V_1(\mu,\pi_{\theta_1};x),\cdots,V_n(\mu,\pi_{\theta_n};x)]$, it suffices to show that each follower's value function $V(\mu,\pi_\theta;x)$ satisfies the PL-inequality w.r.t. $\theta$. i.e.
    \[
   \norm{\nabla_{\theta} V(\mu, \pi_\theta;x)}^2 \geq \kappa (V(\mu, \pi_{\theta^\ast};x) -  V(\mu, \pi_\theta;x)),
    \]


    where $\theta^\ast \in \argmax_{\theta} V(\mu,\pi_\theta;x)$. We assume that the reward function is non-trivial (i.e., not identically zero). For any $\theta \not =\theta^*$, we can define a positive constant $\beta \triangleq \max_{(s,a) \in S\times A} \norm{\frac{\partial V(\mu,\pi_\theta;x)}{\partial \theta_{s,a}}}$. It follows that that $\beta \leq \norm{\nabla_{\theta} V(\mu,\pi_\theta;x)}$ because the norm of a vector is always greater than or equal to the absolute value of any individual entry. From Proposition \ref{pr:gradient of direct para} in Appendix ~\ref{app:proposition}, $\beta = \frac{1}{1-\gamma} \max_{(s,a)}  d^{\pi_\theta}_\mu(s) \norm{Q^{\pi_\theta}(s,a)}$ under direct parameterization, and $\beta = \frac{1}{\tau(1-\gamma)} \max_{(s,a)}  d^{\pi_\theta}_\mu(s) \pi_{\theta}(a|s) \norm{A^{\pi_\theta}(s,a)}$ under softmax parameterization.  Thus, $\beta>0$ under reasonable reward design. 


    By the performance difference lemma\citep{agarwal2021theory}, we can show that the difference between value function evaluated at any policy $\pi_\theta$ and optimal policy $\pi_{\theta^\ast}$ is upper bounded.
    \begin{equation}
    \begin{aligned}
          V(\mu, \pi_{\theta^\ast};x) -  V(\mu, \pi_\theta;x) = &
        \frac{1}{1-\gamma} \sum_{s,a} d^{\pi_\theta}_\mu(s) \pi_\theta(a|s) A^{\pi_{\theta^\ast}}(s,a) \\ \leq & \frac{1}{1-\gamma} \sum_{s,a} d^{\pi^\ast_\theta}(s)\max_{\bar a} A^{\pi_\theta}(s,\bar a) \\\overset{(i)} {\leq} & M ,
    \end{aligned}
    \end{equation} where the upper bound $M$ is a positive constant and $(i)$ can be derived from the fact that both the state visitation frequency and advantage function are bounded quantities.



    Thus, let $0 < \kappa \leq\frac{\beta^2}{M}$, 
    for any $(x,\theta)$, we have 

    \begin{equation*}
    \kappa (V(\mu,\pi_{\theta^\ast};x) - V(\mu,\pi_\theta;x))\\
    \leq \frac{\beta^2}{M} \cdot M  
    =
    \beta^2 \leq  \norm{\nabla_{\theta} V(\mu, \pi_\theta;x)}^2.
    \end{equation*}

\end{document}